\newtheorem{theorem}{Theorem}
\newtheorem{lemma}{Lemma}
\newenvironment{proof}[1][Proof]{\begin{trivlist}
  \item[\hskip \labelsep {\bfseries #1}]}{\end{trivlist}}
\definecolor{red}{rgb}{1,0,0}
\definecolor{blue}{rgb}{0,0,1}
\definecolor{green}{rgb}{0,0.6,0.4}
\def\blue{\color{blue}}
\newcommand{\V}[1]{\ensuremath{\boldsymbol{#1}}\xspace}
\newcommand{\diag}{\F{diag}}
\def\E{\mathop{\rm E}\nolimits}
\def\P{\mathbb{P}}
\def\Var{\mathop{\rm Var}\nolimits}
\def\diag{\mathop{\rm diag}\nolimits}
\def\argmax{\mathop{\rm argmax}\nolimits}
\newcommand{\bea}{\begin{eqnarray}}
\newcommand{\eea}{\end{eqnarray}}
\newcommand{\beaa}{\begin{eqnarray*}}
\newcommand{\eeaa}{\end{eqnarray*}}
\newcommand{\bi}{\begin{itemize}}
\newcommand{\ei}{\end{itemize}}
\newcommand*{\QEDB}{\hfill\ensuremath{\square}}%
\begin{document}

\title{Simultaneous Detection of Signal Regions Using Quadratic Scan Statistics With Applications in Whole Genome Association Studies}
\author{Zilin Li, Yaowu Liu and Xihong Lin  \thanks{Zilin Li is a postdoctoral fellow in the Department of Biostatistics at Harvard T.H. Chan School of Public Health({\em li@hsph.harvard.edu}). Yaowu Liu is a postdoctoral fellow in the Department of Biostatistics at Harvard T.H. Chan School of Public Health ({\em yaowuliu@hsph.harvard.edu}). Xihong Lin is Professor of Biostatistics at Harvard T.H. Chan School of Public Health and Professor of Statistics at Harvard University ({\em xlin@hsph.harvard.edu}). This work was supported by grants R35-CA197449, U19CA203654  and P01-CA134294 from the National Cancer Institute, U01-HG009088 from the National Human Genome Research Institute, and R01-HL113338 from the National Heart, Lung, and Blood Institute.}}

\date{}

\maketitle
\thispagestyle{empty}
\baselineskip=20pt

\begin{abstract}
\noindent{We consider in this paper detection of signal regions associated with disease outcomes in whole genome association studies. Gene- or region-based methods have become increasingly popular in whole genome association analysis as a complementary approach to traditional individual variant analysis. However, these methods test for the association between an outcome and the genetic variants in a pre-specified region, e.g., a gene. In view of massive intergenic regions in whole genome sequencing (WGS) studies, we propose a computationally efficient quadratic scan (Q-SCAN) statistic based method to detect the existence and the locations of signal regions by  scanning the genome continuously. The proposed method accounts for the correlation (linkage disequilibrium) among genetic variants, and allows for signal regions to have both causal  and neutral variants, and the effects of signal variants to be in different directions. We study the asymptotic properties of the proposed Q-SCAN statistics. We derive an empirical threshold that controls for the family-wise error rate, and show that under regularity conditions the proposed method consistently selects the true signal regions. We perform simulation studies to evaluate the finite sample performance of the proposed method. Our simulation results show that the proposed procedure outperforms the existing methods, especially when signal regions have causal variants whose  effects are in different directions, or are contaminated with neutral variants. We illustrate Q-SCAN by analyzing the WGS data from the Atherosclerosis Risk in Communities (ARIC) study.
}

\vspace{.2in}
\noindent
\textbf{Key words}: Asymptotics; Family-wise error rate; Multiple hypotheses; Scan statistics; Signal detection; Whole genome sequencing association studies.
\end{abstract}

\newpage
\pagestyle{plain}
\setcounter{page}{1}

\section{Introduction}\label{sec:introduction}
An important goal of human genetic research is to identify the genetic basis for human diseases or traits. Genome-Wide Association Studies (GWAS) have been widely used to dissect the genetic architecture of complex diseases and quantitative traits in the past ten years. GWAS uses an array technology that genotypes millions of Single Nuclear Polymorphisms (SNPs) across the genome, and aims at identifying SNPs that are associated with traits or disease outcomes. GWAS has been successful for identifying thousands of common genetic variants putatively harboring susceptibility alleles for complex diseases \citep{visscher2012five}.  However, these common variants only explain a small fraction of heritability \citep{manolio2009finding} and the vast majority of variants in the human genome are rare \citep{10002010map}. A rapidly increasing number of Whole Genome Sequencing (WGS) association studies are being conducted to identify susceptible rare variants, for example the Genome Sequencing Program (GSP) of the National Human Genome Research Institute, and the Trans-Omics for Precision Medicine (TOPMed) Program of the National Heart, Lung, and Blood Institute.

A limitation of GWAS is that it only genotypes common variants.  A vast majority of variants in the human genome are rare \citep{10002010map, tennessen2012evolution}. Whole genome sequencing studies allow studying rare variant effects. Individual variant analysis that commonly used in GWAS is however not applicable for analyzing rare variants in WGS due to a lack of power
\citep{bansal2010statistical, kiezun2012exome, lee2014rare}. Gene-based tests, as an alternative to the traditional single variant test, have become increasingly popular in recent years in GWAS analysis \citep{li2008methods, madsen2009groupwise, wu2010powerful}. Instead of testing each SNP individually, these gene based tests evaluate the cumulative effects of multiple variants in a gene, and can boost power when multiple variants in the gene are associated with a disease or a trait \citep{han2009rapid,wu2010powerful}. There is an active recent literature on rare variant analysis methods which jointly test the effects of multiple variants in a variant set, e.g., a genomic region,  such as burden tests \citep{morgenthaler2007strategy, li2008methods, madsen2009groupwise}, and  non-burden tests \citep{wu2011rare, neale2011testing,lin2011general, lee2012optimal}, e.g., Sequence Kernel Association Test (SKAT) \citep{wu2011rare}. The primary limitation of these gene-based tests is that it needs to pre-specify genetic regions, e.g., genes, to be used for analysis. Hence these existing gene-based approaches are not directly applicable to WGS data, as analysis units are not well defined across the genome, because of a large number of intergenic regions. It is of substantially interest to scan the genome continuously to identify the sizes and locations of signal regions.

Scan statistics \citep{naus1982approximations} provide an attractive framework to scan the whole genome continuously for detection of signal regions in whole genome sequencing data. The classical fixed window scan statistics allow for overlapping windows using a moving window of a fixed size, which ``shifts forward" a window with a number of variants at a time and searches for  the  windows containing signals. A limitation of this approach is that the window size needs to be pre-specified in an ad hoc way. In cases where multiple variants are independent in a sequence, \cite{sun2006scan} proposed a region detection procedure using a scan statistic that aggregates the $p$-values of individual variant tests. However this is not applicable to WGS due to the linkage disequilibrium (LD), i.e,. correlation, among nearby variants. Recently, \cite{mccallum2015empirical} proposed likelihood-ratio-based scan statistic procedure to refine disease clustering region in a gene, but not for testing associations across the genome. Furthermore, this method does not allow for covariates adjustment (e.g., age, sex and population structures), and can be only used for binary traits.

The mean-based scan statistic procedures have been used  in DNA copy number analysis. Assuming  all variants are signals with the same mean in signal regions, several authors have proposed to use the mean of marginal test statistics in each candidate region as a scan statistic. Specifically, \cite{arias2005near} proposed a likelihood ratio-based  mean scan procedure in the presence of only one signal region. \cite{zhang2010detecting} described an analytic approximation to the significance level of this scan procedure, while \cite{jeng2010optimal} showed this procedure is asymptotically optimal in the sense that it separates the signal segments from the non-signals if it is possible to consistently detect the signal segments  by any identification procedure. This setting is closely related to the change-point detection problem. \cite{olshen2004circular} developed an iterative circular binary segmentation procedure to detect change-points, whereas \cite{zhang2007modified,zhang2012model} proposed a BIC-based model selection criterion for estimating the number of change-points. However,  the key assumption of these mean scan procedures that all observations have the same signals  in signal regions generally does not hold in genetic association studies.

Indeed, although the mean based scan statistics are useful for copy number analysis, these procedures have several limitations for detecting signal regions in whole genome array and sequencing association studies. Specifically, they will lose power due to signal cancellation in the presence of both trait-decreasing and trait-increasing genetic variants, or the presence of both causal and neural variants in a signal region. Both situations are common in practice.  Besides, these procedures assume the individual variant test statistics are independent across the whole genome. However, in practice, the variants in a genetic region are correlated due to linkage disequilibrium (LD).

In this paper, we propose a quadratic scan statistic based procedure (Q-SCAN) to detect the existence and locations of signal regions in whole genome association studies by scanning the whole genome  continuously. Our procedure can consistently detect an entire signal segment in the presence of both trait-increasing and trait-decreasing variants and mixed signal and neutral variants. It also accounts for the correlation (LD) among the variants when scanning the genome. We derive an empirical threshold that controls the family-wise error rate. We study the asymptotic property of the proposed scan statistics, and show that the proposed procedure can consistently select the exact true signal segments under some regularity conditions. We propose a computationally efficient searching algorithm for the detection of multiple non-overlapping signal regions.

We conduct simulation studies to evaluate the finite sample performance of the proposed procedure, and compare it with several existing methods. Our results show that, the proposed scan procedure outperforms the existing methods in the presence of weak causal and neutral variants, and both trait-increasing and trait-decreasing variants in signal regions. The advantage of the proposed method is more pronounced in the presence of the correlation (LD) among the variants in  signal regions. We  applied  the proposed procedure to the analysis of  WGS lipids data from the Atherosclerosis Risk in Communities (ARIC) study to identify genetic regions which are associated with lipid traits.

The remainder of the paper is organized as follows. In Section 2, we introduce the hypothesis testing problem and describe our proposed scan procedure and a corresponding algorithm to detect multiple signal regions. In Section 3, we present the asymptotic properties of the scan statistic, as well as the statistical properties of  identifiable regions. In Section 4, we compare the performance of our procedure with other scan statistic procedures in simulation studies. In Section 5, we apply the proposed scan procedure to analyze WGS data from the ARIC study. Finally, we conclude the paper with discussions in Section 6. The proofs are relegated to the Appendix.

\section{The Statistical Model and the Quadratic Scan Statistics for Signal Detection}\label{sec:model}
\subsection{Summary Statistics of Individual Variant Analysis Using Generalized Linear Models}
Suppose that the data are from  $n$ subjects. For the $i$th subject ($i=1,\cdots, n$), $Y_i$ is an outcome, $\V X_i=(X_{i1},\dots,X_{iq})^T$ is a vector of $q$  covariates, and   $G_{ij}$  is the $j$th of $p$ variants in the genome. One constructs individual variant test statistics in GWAS and WGS studies  by regressing $Y_i$ on each variant $G_{ij}$ adjusting for the covariates $\V X_i$. Conditional on $(\V X_i,\V G_{ij})$, $Y_i$ is assumed to follow a distribution in the exponential family  with the density $f(Y_i) = \exp\{Y_i\theta_i-b(\theta_i)/a_i(\phi) + c(Y_i,\phi) \}$, where $a(\cdot)$, $b(\cdot)$ and $c(\cdot)$ are some known functions, and $\theta_i$ and $\phi$ are the canonical parameter and the dispersion parameter, respectively \citep{mccullagh1989generalized}. Denote by $\eta_{i}=\E(Y_i|\V X_i, G_{ij})=b'(\theta_i)$.  The test statistic for the $j$th variant is constructed using the following Generalized Linear Model (GLM) \citep{mccullagh1989generalized}
\begin{equation*}
g(\eta_{i})=\V X_i^T\V\alpha+G_{ij}\beta_j,
\end{equation*}
where $g(\cdot)$ is a monotone link function. For simplicity, we assume $g(\cdot)$ is a canonical link function. The variance of $Y_i$ is $var(Y_i)=a_i(\phi)v(\eta_{i})$, where $v(\eta_i)=b^{\prime\prime}(\theta_i)$ is a variance function.

Let $\hat{\eta}_{0i}=g^{-1}(\V X_i^T\hat{\V\alpha})$, where $\hat{\V\alpha}$ is the Maximum Likelihood Estimator (MLE) of $\V\alpha$, and $\hat{\phi}$ is the MLE of $\phi$, both under the global null model of $g(\eta_i)=\V X_i^T\V\alpha$. Assume $\V\Lambda=\diag\big \{a_1(\hat \phi)v(\hat{\eta}_{01}),\dots,a_n(\hat\phi)v(\hat{\eta}_{0n})\big \}$ and $\V P = \V \Lambda^{-1} - \V\Lambda^{-1} \V X (\V X^T \V \Lambda^{-1} \V X)^{-1}\V X^T \V \Lambda^{-1}$. The marginal score test statistic for $\beta_j$ of the $j$th variant is
\begin{equation*}
U_j = \V G_j^T (\V Y -\hat{\V\eta}_0)\big/\sqrt{n},
\end{equation*}
where $\V G_j=(G_{1j},\cdots, G_{nj})^T$ denotes the $j$th variant data of $n$ subjects, $\hat{\V \eta}_0=(\hat\eta_{10},\cdots,\hat{\eta}_{n0})^T$ and $\V Y=(Y_1,\cdots, Y_n)^T$. These individual variant test statistics are asymptotically jointly distributed as $\V U \sim N(\V\mu,\V \Sigma)$, where $\V U=(U_1,\cdots, U_p)^T$, $\V \mu=E(\V U)$. Note that $\V \mu=0$ under the global null of all $\beta_j$ being 0, and   $\V \Sigma_{jj'}$ can be estimated by
\begin{equation}\label{Sigma-hat}
\hat{\V\Sigma}_{jj'} = \V G_j^T \V P \V G_{j'}\big/n.
\end{equation}
These individual SNP summary statistics $U_j$ are often available in public domains or provided by investigators to facilitate meta-analysis of multi-cohorts.

Genetic region-level analysis has become increasingly important in GWAS and WGS  rare variant association studies \citep{li2008methods, lee2014rare}.  The existing region-based tests require pre-specification of regions using biological constructs, such as genes. For a given region,  region-level analysis aggregates the marginal individual SNP test statistics $U_j$ across the variants in the region to test for the significance of the region
\citep{li2008methods, madsen2009groupwise,wu2011rare}. However, whole genome array and sequencing studies consist of many intergenic regions. Hence, analysis based on genes or prespecified regions of a fixed length, e.g., a moving window of 4000 basepairs, are not desirable for scanning the genome to detect signal segments. This is because region-based tests will lose power if a pre-specified region is too big or too small.  Indeed, it is of primary interest in whole genome association analysis to scan the whole genome to detect the sizes and locations of the regions that are associated with diseases and traits. We tackle this problem using the quadratic scan statistic.

\subsection{Detection of Signal Regions Using Scan Statistics}\label{subsec:signal region detection}
Let a sequence of $p$ marginal test statistics be $\V U=\{U_1,\dots,U_p\}$, where $U_i$ is the marginal test statistic at location $i$ and $p$ is the total number of locations, e.g., the total number of variants in GWAS or WGS. We assume that the sequence $\V U$ follows a multivariate normal distribution
\begin{equation}\label{statistical model}
\V U \sim N(\V\mu,\V\Sigma),
\end{equation}
where $ \V \mu$ is an unknown mean of $\V U$ and $\V \Sigma=cov(\V U)$. Under the global null hypothesis of no signal variant across the genome, we have $\V\mu=0$. Under the alternative hypothesis of non-overlapping signal regions, there exist signals at certain non-overlapping regions $I_1,\dots,I_r$ satisfying $\V\mu_{I_j} \neq 0$, where $\V\mu_{I_j}=\{\mu_i\}_{i \in I_j}$ and $j=1,\dots,r$. Note the lengths of the signal regions are allowed to be different. Besides, the signal region $I_j$ satisfies that in a large area that contains $I_j$, there is no signal point ($\mu=0$) outside $I_j$ and the edges of $I_j$ are signal points. Denote a collection of the non-overlapping signal regions by $\mathcal{I}=\{I_1,\dots,I_r\}$. Our goal is to detect whether signal segments exist, and if they do exist, to identify the location of these segments. Precisely, we wish to first test
\begin{equation}\label{testing}
H_0:\mathcal{I}=\emptyset \quad \mathrm{against} \quad H_1:\mathcal{I} \neq \emptyset,
\end{equation}
and if $H_0$ is rejected, detect each signal region in $\mathcal{I}$.

A scan statistic procedure solves the hypothesis testing problem (\ref{testing}) by using the extreme value of the scan statistics of all possible regions,
\begin{equation}\label{eq:scan statistics}
Q_{\max}=\displaystyle{\max_{L_{\min} \leq|I| \leq L_{\max}}} Q(I),
\end{equation}
where $Q(I)$ is the scan statistic for region $I$, $|I|$ denotes the number of variants in region $I$,  and $L_{\min}$ and $L_{\max}$ are the minimum and maximum variants number in searching windows, respectively. A large value of $Q_{\max}$ indicates evidence against the null hypothesis. If the null hypothesis is rejected and results in only one region, the selected signal region is $\hat{I}=\displaystyle{\argmax_{L_{\min} \leq |I|\leq L_{\max}}} Q(I)$.

\cite{jeng2010optimal} and \cite{zhang2010detecting} proposed a scan procedure based on the mean of the marginal test statistics of a candidate region (M-SCAN). The mean scan statistic for region $I$ is defined as
\begin{equation}
\label{eq:mean scan statistics}
M(I)=\sum_{i \in I} Z_i\big/\sqrt{|I|},
\end{equation}
where $Z_i = U_i / var(U_i)$ is the standardized score statistics. When the test statistics $Z_i$ are independent ($\V \Sigma=\V I_n$) with a common mean in a signal region ($\mu_{i}=\mu$ for all $i \in I_j$), \cite{arias2005near} and \cite{jeng2010optimal} showed that the mean scan procedure is asymptotically optimal in the sense that it separates the signal segments from the non-signals as long as the signal segments are detectable. However, in whole genome association studies, the assumptions that marginal tests $Z_i$ are independent and have the same mean in signal regions often do not hold. This is because, first,  marginal test statistics in  a region are commonly correlated due to the LD of variants; second, signal variants in a signal region are likely to have effects in different directions and be mixed with neutral variants in the signal regions. Hence application of the existing mean scan statistics (\ref{eq:mean scan statistics}) for detecting signal regions in whole genome association studies is likely to not only yield invalid inference due to failing to account for the correlation between the $Z_i$'s across the genome, but also more importantly lose power due to cancellation of signals in different directions in signal regions.

\subsection{The Quadratic Scan Procedure}\label{subsec:sum of chi-squared}
To overcome the limitations of the mean scan procedure, we propose a quadratic scan procedure (Q-SCAN) that selects signal regions based on the sum of quadratic marginal test statistics, which is defined as,
\begin{equation}\label{eq:quadratic scan statistics}
Q(I) = \frac{\sum_{i \in I} U_i^2  - \E(\sum_{i \in I} U_i^2)}{var(\sum_{i \in I} U_i^2)}
= \frac{\sum_{i \in I} U_i^2 - \sum_{i=1}^{|I|} \lambda_{I,i}}{\sqrt{2\sum_{i=1}^{|I|}\lambda_{I,i}^2}},
\end{equation}
where $\V\lambda_I = (\lambda_{I,1},\lambda_{I,2},\cdots,
\lambda_{I,|I|})^T$ is the eigenvalues of $\V\Sigma_I = \V G_I^T \V P \V G_I \big/ n$ and $\V\Sigma_I$ is the covariance matrix of test statistics $\V U_I$. In the presence of correlation among the test statistics $Z$'s,  the null distribution of $Q(I)$ is  a  centered mixture of chi-squares $\sum_{j=1}^{|I|} \big(\lambda_{I,j}/\sqrt{2\sum_{i=1}^{|I|}\lambda_{I,i}^2}\big) \times (\chi_{1j}^2-1)$, where the $\chi_{1j}^2$ are independent chi-square random variables with one degree of freedom, and asymptotically follows the standard normal distribution when $|I|\rightarrow \infty$. When signals have different directions, the proposed quadratic scan statistic avoids signal cancellation that will result from  using the mean scan  statistic (\ref{eq:mean scan statistics}). By using the standardization of the sum of quadratic marginal test statistics in region $I$, the scan statistics $Q(I)$ handles the different LD structure across the genome and are comparable for different region lengths.

We reject the null hypothesis (\ref{testing}) if the scan statistic of a region is larger than a given threshold $h(p,L_{\min},L_{\max},\alpha)$. The threshold $h(p,L_{\min},L_{\max},\alpha)$ is used to control the family-wise error rate at exact $\alpha$ level. If this results in only one region, the estimated signal region is $\hat{I} = \argmax_{L_{\min} \leq |I| \leq L_{\max}} Q(I)$. If this results in multiple overlapping regions, we estimated the signal region as the interval whose test statistic is greater than the threshold and achieves the local maximum in the sense that the test statistic of that region is greater than the regions that overlap with it. We propose a searching algorithm to consistently detect true signal regions in the next section.

We propose to use Monte Carlo simulations to determine $h(p,L_{\min},L_{\max},\alpha)$ empirically. Specifically, we generate samples from $N(0,\hat{\V \Sigma})$ and calculate $Q_{\max}$. We repeat this for $N$ times and use the $1-\alpha$ quantile of the empirical distribution as the data-driven threshold. Section \ref{subsec:threshold} presents details on calculating the empirical threshold.

\subsection{Searching Algorithm for Multiple Signal Regions}\label{subsec:Searching algorithm}
In general, there might be several signal regions in a whole genome. We now describe an algorithm for detecting multiple signal regions. Motivated by GWAS and WGS, we assume the signal regions are short relatively to the size of the whole genome, and are reasonably well separated. Hence intuitively, the test statistic for proper signal region estimation should achieve a local maximum. Following \cite{jeng2010optimal} and \cite{zhang2010detecting}, our proposed searching algorithm first finds all the candidate regions with the quadratic scan statistic greater than a pre-specified threshold $h(p,L_{\min},L_{\max},\alpha)$. Then we select the intervals from the candidate sets that has the largest test statistic than the other overlapped intervals in the candidate set as the estimated signal regions.  The detailed algorithm is given as follows:

\begin{description}
\item[Step 1.] Set the minimum variants number in searching window $L_{\min}$ and maximum  variants number in searching window $L_{\max}$ and calculate $Q(I)$ for the intervals with variants number between $L_{\min}$ and $L_{\max}$.
\item[Step 2.] Pick the candidate set
    $$\mathcal{I}^{(1)} = \{I :\: Q(I) > h(p,L_{\min},L_{\max},\alpha),\, L_{\min} \leq |I| \leq L_{\max} \}$$
    for some threshold $h(p,L_{\min},L_{\max},\alpha)$. If $\mathcal{I}^{(1)}\neq \emptyset$, we reject the null hypothesis, set $j=1$ and proceed with the following steps.
\item[Step 3.] Let $\hat{I}_j = \argmax_{I \in \mathcal{I}^{(j)}} Q(I)$, and update $\mathcal{I}^{(j+1)} = \mathcal{I}^{(j)} \backslash \{ I\in \mathcal{I}^{(j)}: \: I\bigcap\hat{I}_j \neq \emptyset \}$.
\item[Step 4.] Repeat Step 3 and Step 4 with $j=j+1$ until $\mathcal{I}^{(j)}$ is an empty set.
\item[Step 5.] Define $\hat{I}_1,\hat{I}_2,\cdots$ as the estimated signal regions.
\end{description}

After the test statistic $Q(I)$ is calculated for each region $L_{\min} \leq |I| \leq L_{\max}$, we can estimate the null distribution of $Q_{\max}$.  A threshold $h(p,L_{\min},L_{\max},\alpha)$ is set based on the null distribution of $Q_{\max}$. Specifically, the threshold $h(p,L_{\min},L_{\max},\alpha)$ is calculated to control for the family-wise error rate at a desirable level $\alpha$ by adjusting for multiple testing of all searched regions. Section \ref{subsec:threshold} provides detailed discussions on calculating $h(p,L_{\min},L_{\max},\alpha)$ and Section \ref{subsec:familywise error rate} discussed the bound of $h(p,L_{\min},L_{\max},\alpha)$.

Steps 3-4 are used to search for all the local maximums of the scan statistic $Q(I)$  by iteratively selecting the intervals from the candidate set with the largest scan statistics $Q(I)$, and then deleting a selected signal interval and any other intervals overlapping with it from the candidate set before moving on to select the next signal interval. Step 5 collects all the local maximums as the set of selected signal regions. The intuition of this algorithm is as follows. Since we assume the signal segments are well separated in the sequence, for a signal region, no region with variants number between $L_{\min}$ and $L_{\max}$ overlaps with more than one signal region. Thus the test statistic of a signal region $I$ is larger than the other intervals that overlap with it. It follows that a local maximum provides  good estimation of a signal region.

Selection of the minimum and maximum variants numbers in searching windows, i.e., $L_{\min}$ and $L_{\max}$, is an important issue in scan procedures. Specifically, to ensure that each signal region will be searched, $L_{\min}$ and $L_{\max}$ should be smaller and larger than the variants number in all signal regions, respectively. In the meantime, $L_{\max}$ should be smaller than the shortest gap between signal regions to ensure that no candidate region $I$ with $L_{\min} \leq |I|\leq L_{\max}$ overlaps with two or more signal regions. These two parameters $L_{\min}$ and $L_{\max}$ also determine computation complexity. A smaller range between $L_{\min}$ and $L_{\max}$ requires less computation. Instead of setting the range of moving window sizes on the basis of the number of base pairs, we specify a range of moving window sizes on the basis of the number of variants. In practice, we recommend to choose $L_{\min} = 40$ and $L_{\max} = 200$. Different from the fact that the observed rare variants number in a given window increases with a fixed number of base pairs as sample size increases, such a range specification on the basis of the number of variants is  independent of sample sizes.

\subsection{Threshold for Controlling the Family-Wise Error Rate}\label{subsec:threshold}
Although the theorems in Section \ref{sec:theory} shows that the family-wise error rate can be asymptotically controlled, it is difficult to use a theoretical threshold for an exact $\alpha-$level test in practice. The standard Bonferroni correction for multiple-testing adjustment is also too conservative for the Q-SCAN procedure, because the candidate search regions overlap with each other and the scan statistics for these regions are highly correlated. Therefore, we propose to use Monte Carlo simulations to determine an empirical threshold to control for the family-wise error rate at the $\alpha-$level. For each step, we estimate $\hat{\V\Sigma}$ using (\ref{Sigma-hat}) and generated samples from $ N(0,\hat{\V \Sigma})$. Specifically,  we first generate $\tilde{\V u} \sim N(0,\V I_n)$ and calculate the pseudo-score vector by $\tilde{\V U} = \V G \V P^{1/2} \V u/\sqrt{n}$ where $\V G = (\V G_1^T,\cdots, \V G_n^T)$ is the $p \times n$ genotype matrix, $n$ is the number of subjects in the study, and $\V P$ is the $n \times n$ projection matrix of the null GLM. Then we calculate the extreme value $Q_{\max}$ of the test statistic $Q(I)$  using (\ref{eq:scan statistics}) across the genome based on the pseudo-score $\tilde{\V U}$ and the estimated covariance matrix $\hat{\V \Sigma} = \V G \V P \V G^T/n$. We repeat this for a large number of times, e.g, $2000$ times, and use the $1 - \alpha$ quantile of the empirical distribution of $Q_{\max}$ as the empirical threshold $h(p,L_{\min},L_{\max},\alpha)$ for controlling the family-wise error rate at $\alpha$.

\section{Asymptotic Properties of the Quadratic Scan Procedure}\label{sec:theory}
In this section, we present two theoretical properties of the quadratic scan procedure. The first property shows the convergence rate of the extreme value $Q_{\max}$ and gives a bound of the empirical threshold $h(p,L_{\min},L_{\max},\alpha)$. The second property shows that, under certain regularity conditions, the quadratic scan procedure consistently detects the exact signal regions.

\subsection{Bound of the Empirical Threshold}\label{subsec:familywise error rate}
We first provide a brief summary of notation used in the paper. For any vector $\V a $, set $||\V a||_1 = \sum_i |a_i|$, $||\V a||_2 = \sqrt{\sum_i a_i^2}$ and $||\V a||_{\infty} = \sup_i |a_i|$. For two sequences of real numbers $a_p$ and $b_p$, we say $a_p \ll b_p$ or $a_p = o(b_p)$, when $\limsup a_p/b_p \rightarrow 0$. Recall that $U_i = \V G_i^T(\V Y - \hat{\V\mu})/\sqrt{n}$ is the score statistic for variant $i$ ($i=1,2,\cdots,p$.) Under the null hypothesis, $U_i \sim N(0,\sigma_i^2)$ with $\sigma_i^2 = \V G_i^T \V P \V G_i/n$ for all $i=1,2,\cdots,p$, where $\V P$ is the projection matrix in the null model. Assume there exists a constant $c>0$ such that $\sigma_i^2 \geq c$. The following theorem gives the convergence rate of $Q_{\max}$.
\begin{theorem}\label{th-multi-scan}
If the following conditions (A)-(C) hold,
\begin{itemize}
\item[(A)] $\max_{|I|=L_{\max}} ||\V\lambda_I||_{\infty} \leq K_0$, where $K_0$ is a constant,
\item[(B)] $\frac{L_{\min}}{\log(p)} \rightarrow \infty$ and $\frac{\log(L_{\max})}{\log(p)} \rightarrow 0$,
\item[(C)] $\{U_i\}_{i=1}^p$ is $M_p$-dependent and $\frac{\log(M_p)}{\log(p)} \rightarrow 0$,
\end{itemize}
then
\begin{equation*}
\frac{\max_{L_{\min} \leq |I| \leq L_{\max}} Q(I)}{\sqrt{2\log(p)}} \stackrel{p}{\rightarrow} 1.
\end{equation*}
\end{theorem}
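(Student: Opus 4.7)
The plan is to prove two matching bounds, namely $Q_{\max} \leq (1+\epsilon)\sqrt{2\log p}$ and $Q_{\max} \geq (1-\epsilon)\sqrt{2\log p}$ each holding with probability tending to one, for every fixed $\epsilon > 0$. I will work under the null hypothesis, so that by the statistical model (\ref{statistical model}) we have $\V U_I \sim N(0, \V\Sigma_I)$ and the scan statistic admits the representation $Q(I) = \sum_{j=1}^{|I|} \bigl(\lambda_{I,j}/\sqrt{2\|\V\lambda_I\|_2^2}\bigr)(\chi_{1j}^2 - 1)$ with independent unit chi-squares $\chi_{1j}^2$. The guiding heuristic is the maximum-of-Gaussians phenomenon: for $p$ independent $N(0,1)$ variables the maximum is $\sqrt{2\log p}(1 + o_p(1))$, and the goal is to propagate this through the weighted chi-square structure of $Q(I)$ and the local dependence allowed by condition (C).

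For the upper half I apply a union bound over the $O(p\,L_{\max})$ candidate intervals together with the Laurent--Massart deviation inequality for weighted chi-square sums,
\[
P\!\left(Q(I) \geq \sqrt{2x} + \frac{\sqrt{2}\,\|\V\lambda_I\|_\infty\, x}{\|\V\lambda_I\|_2}\right) \leq e^{-x}, \qquad x > 0.
\]
Condition (A) gives $\|\V\lambda_I\|_\infty \leq K_0$ uniformly, and $\sigma_i^2 \geq c$ together with $\tr(\V\Sigma_I) = \sum_{i \in I}\sigma_i^2$ and Cauchy--Schwarz gives $\|\V\lambda_I\|_2^2 \geq c^2 |I| \geq c^2 L_{\min}$. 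Choosing $x = (1+\epsilon')^2 \log p$ with $\epsilon' \in (0,\epsilon)$ makes $\sqrt{2x} = (1+\epsilon')\sqrt{2\log p}$ and the remainder at most $O\!\left(\log p/\sqrt{L_{\min}}\right)$, which is $o(\sqrt{\log p})$ since condition (B) yields $\log p/L_{\min} \to 0$. The union bound therefore gives $p L_{\max} \cdot p^{-(1+\epsilon')^2} = L_{\max}\, p^{-(2\epsilon' + \epsilon'^2)} \to 0$ because $\log L_{\max}/\log p \to 0$ by (B).

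For the lower half I build a genuinely independent family of block statistics by partitioning $\{1,\dots,p\}$ into $m_p = \lfloor p/(L_{\min}+M_p)\rfloor$ disjoint intervals of length $L_{\min}$ separated by gaps of width at least $M_p$. Under (C) the resulting $\{Q(I_k)\}_{k=1}^{m_p}$ are mutually independent, and since both (B) and (C) imply $\log L_{\min}/\log p \to 0$ and $\log M_p/\log p \to 0$, we obtain $\log m_p = (1-o(1))\log p$. A Cram\'er-type moderate-deviation bound for the standardized weighted chi-square $Q(I_k)$ should then yield, for $t_p = (1-\epsilon)\sqrt{2\log p}$,
\[
P\bigl(Q(I_k) \geq t_p\bigr) \geq \frac{c_1}{\sqrt{\log p}}\, p^{-(1-\epsilon)^2},
\]
and independence gives $P\bigl(Q_{\max} < t_p\bigr) \leq \exp\!\bigl(-c_1 m_p\, p^{-(1-\epsilon)^2}/\sqrt{\log p}\bigr) \to 0$ because $m_p \geq p^{1-o(1)}$ and $(1-\epsilon)^2 < 1$.

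The main technical obstacle is the moderate-deviation step: because $\chi_1^2 - 1$ is only sub-exponential, a naive Lindeberg CLT gives Gaussian tails only out to deviations of order $\sqrt{\log |I|}$, which can be much smaller than the required $\sqrt{\log p}$. Closing this gap requires an explicit Chernoff-type expansion of the log moment generating function of $\sum_j \lambda_{I_k,j}(\chi_{1j}^2-1)$ together with quantitative control of the cubic remainder in Cram\'er's expansion, for which the ratio $\|\V\lambda_{I_k}\|_\infty/\|\V\lambda_{I_k}\|_2 \leq K_0/(c\sqrt{L_{\min}}) \to 0$ supplied by (A) and (B) is the essential input. The other two ingredients, the concentration inequality on the upper side and the block independence on the lower side, are routine once the $M_p$-dependence in (C) is invoked.
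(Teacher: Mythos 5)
Your proposal is correct. The upper half coincides with the paper's argument: a union bound over the $O(pL_{\max})$ candidate windows combined with the Laurent--Massart inequality, with condition (A) and the bound $||\V\lambda_I||_2 \geq c\sqrt{L_{\min}}$ making the linear term $o(\sqrt{\log p})$ under (B), and $\log L_{\max} = o(\log p)$ killing the union bound. For the lower half you take a genuinely different route. You thin the scan to $m_p = \lfloor p/(L_{\min}+M_p)\rfloor$ disjoint windows separated by gaps of width $M_p$, so that condition (C) makes the block statistics exactly independent, and conclude from $\log m_p = (1-o(1))\log p$ together with a Cram\'er-type lower tail estimate for a single block. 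The paper instead fixes the window length at $L_{\max}$ (its Lemma \ref{th-fixed-scan}), keeps all $p-L_{\max}+1$ overlapping windows, and applies the Chung--Erd\H{o}s second-moment inequality, using $M_p$-dependence only to factorize the pairs with $|i-j|>M_p+L_{\max}$ and to show the remaining $O\big((M_p+L_{\max})\sum_k\P(A_k)\big)$ cross terms are negligible. Both routes bottom out in exactly the same moderate-deviation estimate: the paper's Lemma \ref{lemma:bernstein} (Petrov's refinement of Bernstein's inequality) is precisely the ``explicit Chernoff-type expansion with control of the cubic remainder'' that you identify as the technical obstacle, and it applies here because the correction exponent $x^3/\sqrt{|I|} \asymp (\log p)^{3/2}/\sqrt{L_{\min}}$ is $o(\log p)$ under (B); combined with Mills' ratio this gives your claimed single-block tail lower bound up to a harmless $p^{o(1)}$ factor, which does not affect the conclusion since $(1-\epsilon)^2<1$ strictly. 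Your blocking argument is the more elementary of the two and loses nothing, because only the exponent $\log m_p/\log p \to 1$ matters for the first-order constant; the Chung--Erd\H{o}s route would be needed if the windows could not be thinned to an independent subfamily, but under condition (C) that extra generality buys nothing.
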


Condition (A) holds when the lower bound of the minor allele frequency of variants is a constant that greater than 0.
In GWAS and WGS, the number of variants $p$ is large, e.g.,  from hundreds of thousands to hundreds of millions. However, $\log (p)$ grows much slower and is comparable to the length of genes and of LD blocks, so that the condition (B) of $L_{\min}$ and $L_{\max}$ is reasonable in practice. Further, since two marginal test statistics are  independent when two variants are sufficiently far apart in the genome, the assumption of $M_p$ dependence in Condition (C) is reasonable in reality.

Note the empirical threshold $h(p,L_{\min},L_{\max},\alpha)$ is the $(1-\alpha)$th quantile of $Q_{\max}$, that is,
\begin{equation*}
\P\big(Q_{\max} > h(p,L_{\min},L_{\max},\alpha)\big) = \alpha.
\end{equation*}
By Theorem \ref{th-multi-scan},  for any $\epsilon>0$, when $p$ is sufficiently large, we have $ (1-\epsilon) \sqrt{2\log(p)} \leq h(p,L_{\min},L_{\max},\alpha) \leq (1+\epsilon) \sqrt{2\log(p)}$. Next we give a more accurate upper bound of $h(p,L_{\min},L_{\max},\alpha)$.

\begin{theorem}\label{prop:thres-upper}
If conditions (A) and (B) in Theorem \ref{th-multi-scan} hold, for $p$ sufficiently large, we have
\begin{equation*}
h(p,L_{\min},L_{\max},\alpha) \leq \sqrt{2\big[\log\{p(L_{\max}-L_{\min})\} - \log(\alpha)\big]} + \frac{\sqrt{2}\big[\log\{p(L_{\max}-L_{\min})\} - \log(\alpha)\big]}{\{L_{\min}\log(p)\}^{\frac{1}{4}}}.
\end{equation*}
\end{theorem}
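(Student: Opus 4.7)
The plan is to obtain a sharp tail bound for each individual scan statistic $Q(I)$, combine these bounds by a naive union bound over the $O(p(L_{\max}-L_{\min}))$ candidate intervals, and then invert this tail bound at level $\alpha$ to extract an upper bound on $h(p,L_{\min},L_{\max},\alpha)$.

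First I would diagonalize: under $H_0$, $\V U_I \sim N(\V 0,\V\Sigma_I)$, so writing $\V\Sigma_I = \V V\diag(\V\lambda_I)\V V^T$ gives $\sum_{i\in I} U_i^2 = \sum_{j=1}^{|I|}\lambda_{I,j} Z_j^2$ with $Z_j$ i.i.d.\ $N(0,1)$. Hence
\begin{equation*}
Q(I) \;=\; \sum_{j=1}^{|I|} a_{I,j}\,(Z_j^2-1),\qquad a_{I,j}=\frac{\lambda_{I,j}}{\sqrt{2\sum_k \lambda_{I,k}^2}},
\end{equation*}
which is a centered weighted sum of chi-squares with $\|\V a_I\|_2 = 1/\sqrt{2}$ and $\|\V a_I\|_\infty = \|\V\lambda_I\|_\infty/\sqrt{2\|\V\lambda_I\|_2^2}$. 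The Laurent–Massart concentration inequality for weighted chi-squares then yields, for every $x>0$,
\begin{equation*}
\P\!\left(Q(I)\ \geq\ \sqrt{2x}+\frac{\sqrt{2}\,\|\V\lambda_I\|_\infty}{\|\V\lambda_I\|_2}\,x\right)\ \leq\ e^{-x}.
\end{equation*}

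Next I would control the ratio $\|\V\lambda_I\|_\infty/\|\V\lambda_I\|_2$ uniformly over candidate intervals. Condition (A) gives $\|\V\lambda_I\|_\infty\leq K_0$. For the denominator, since $\|\V\lambda_I\|_2^2=\tr(\V\Sigma_I^2)=\sum_{i,j\in I}\V\Sigma_{I,ij}^2\geq\sum_{i\in I}\sigma_i^4\geq c^2|I|\geq c^2 L_{\min}$, we obtain $\|\V\lambda_I\|_\infty/\|\V\lambda_I\|_2\leq K_0/(c\sqrt{L_{\min}})$. The number of candidate intervals is at most $N_p := p(L_{\max}-L_{\min}+1)$, so a union bound gives
\begin{equation*}
\P\!\left(Q_{\max}\geq\sqrt{2x}+\tfrac{\sqrt{2}K_0}{c\sqrt{L_{\min}}}\,x\right)\ \leq\ N_p\,e^{-x}.
\end{equation*}
Setting $x=\log\{p(L_{\max}-L_{\min})\}-\log\alpha$ makes the right-hand side equal to $\alpha$ (up to the harmless $+1$ in $N_p$), hence by the definition of $h$ as the $(1-\alpha)$-quantile of $Q_{\max}$ we obtain
\begin{equation*}
h(p,L_{\min},L_{\max},\alpha)\ \leq\ \sqrt{2x}+\frac{\sqrt{2}K_0}{c\sqrt{L_{\min}}}\,x.
\end{equation*}

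The final step is cosmetic: we replace the constant $K_0/c$ in the quadratic term by the expression $(L_{\min}\log p)^{-1/4}$ appearing in the theorem. This holds iff $K_0^4\log(p)/c^4\leq L_{\min}$, and condition (B), which forces $L_{\min}/\log(p)\to\infty$, guarantees exactly this for all sufficiently large $p$. The main obstacle I anticipate is the first step: isolating a sub-Gaussian-plus-sub-exponential tail with the correct $\sqrt{2}$ leading constant. Any concentration bound weaker than Laurent–Massart (e.g., a crude Hanson–Wright with unspecified constants) would give the correct order but the wrong coefficient and spoil the asymptotic sharpness linking this bound to Theorem \ref{th-multi-scan}. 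Verifying the lower bound $\|\V\lambda_I\|_2^2\geq c^2 L_{\min}$ is straightforward but essential for turning the $\|\V\lambda_I\|_\infty/\|\V\lambda_I\|_2$ term into a quantity decaying in $L_{\min}$, which is what permits the second summand in the theorem to be negligible compared to the first under condition (B).
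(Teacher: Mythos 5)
Your proposal is correct and follows essentially the same route as the paper: diagonalize $\V\Sigma_I$ to write $Q(I)$ as a centered weighted chi-square with $\|\V a_I\|_2=1/\sqrt{2}$, apply the Laurent--Massart inequality, bound $\|\V\lambda_I\|_\infty/\|\V\lambda_I\|_2$ via condition (A) and the lower bound $\|\V\lambda_I\|_2\geq c\sqrt{L_{\min}}$, take a union bound over the $O(p(L_{\max}-L_{\min}))$ intervals at $x=\log\{p(L_{\max}-L_{\min})\}-\log\alpha$, and absorb the constant $K_0/c$ into $(L_{\min}\log p)^{-1/4}$ using $L_{\min}/\log p\to\infty$ from condition (B). No substantive differences from the paper's argument.
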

By Theorems \ref{th-multi-scan} and \ref{prop:thres-upper}, for $p$ sufficiently large, we give the bound of the empirical threshold as follows,
\begin{equation*}
(1-\epsilon)\sqrt{2\log(p)} \leq h(p,L_{\min},L_{\max},\alpha) \leq \sqrt{2\gamma_p} + \frac{\sqrt{2}\gamma_p}{\{L_{\min}\log(p)\}^{\frac{1}{4}}},
\end{equation*}
 where $\epsilon$ is a small constant and $\gamma_p = \log\{p(L_{\max}-L_{\min})\} - \log(\alpha) $.

\subsection{Consistency of Signal Region Detection}\label{subsec:Power analysis}
In this section, we show the results of power analysis. We first show that the proposed Q-SCAN procedure could consistently select a signal region that overlaps with the true signal region. Let $\V\mu_{I}=\{\mu_i\}_{i \in I}$ for any region $I$. Assume $I^*$ is the signal region with $\V\mu_{I^*} \neq 0$ and $L_{\min} \leq |I^*| \leq L_{\max}$. Denote the signal region by $I^* = (\tau_1^*,\tau_2^*]$ that satisfies certain regularity conditions on its norm and edges, e.g., the $L_2$ norm measuring the overall signal strength of $I^*$ is sufficiently large and the edges of $I^*$ are signal points, that is, $\mu_{\tau_1^*+1} \neq 0$ and $\mu_{\tau_2^*} \neq 0$. We also assume there is no signal point ($\mu \neq 0$) outside $I^*$ in a large area that contains $I^*$, that is, there exists $\tau \geq L_{\max}$, such that $\V\mu_{I_1} = \V\mu_{I_2} = 0$, where $I_1=(\tau_1^*-\tau,\tau_1^*]$ and $I_2 = (\tau_2^*,\tau_2^*+\tau]$ are the non-signal regions of length $\tau$ on the left and right of the signal regions $I^*$. We formally present these in Conditions (D) and (E) in the following two theorems.
\begin{theorem}\label{th-power-weak}
Assume conditions (A)-(C) in Theorem \ref{th-multi-scan} and the following condition (D) hold,
\begin{itemize}
\item[(D)] $\frac{||\V\mu_{I^*}||_2^2}{||\V\lambda_{I^*}||_2} \geq 2(1+\epsilon_0)\sqrt{\log(p)}$ for some constant $\epsilon_0 > 0$,
\end{itemize}
then,
\begin{equation*}
\P\big\{Q(I^*) > h(p,L_{\min},L_{\max},\alpha)\big\} \rightarrow 1.
\end{equation*}
\end{theorem}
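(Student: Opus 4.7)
The plan is to decompose $Q(I^*)$ around its noncentrality under the alternative and show that both the linear and the quadratic fluctuations are small on the scale $\sqrt{\log p}$ that separates the mean from the threshold. Writing $\V U_{I^*}=\V\mu_{I^*}+\V\epsilon$ with $\V\epsilon\sim N(\V 0,\V\Sigma_{I^*})$ and expanding $\V U_{I^*}^T\V U_{I^*}$ gives
\begin{equation*}
Q(I^*)\;=\;\underbrace{\frac{\|\V\mu_{I^*}\|_2^2}{\sqrt{2}\,\|\V\lambda_{I^*}\|_2}}_{\displaystyle M}\;+\;\underbrace{\frac{\sqrt{2}\,\V\mu_{I^*}^T\V\epsilon}{\|\V\lambda_{I^*}\|_2}}_{\displaystyle T_1}\;+\;\underbrace{\frac{\V\epsilon^T\V\epsilon-\tr(\V\Sigma_{I^*})}{\sqrt{2}\,\|\V\lambda_{I^*}\|_2}}_{\displaystyle T_2},
\end{equation*}
so the event $\{Q(I^*)>h\}$ becomes $\{T_1+T_2>h-M\}$. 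By (D), $M\ge\sqrt{2}(1+\epsilon_0)\sqrt{\log p}$. Combining Theorem~\ref{prop:thres-upper} with (B) gives $h\le\sqrt{2\log p}\,(1+o(1))$: indeed $\gamma_p=\log p+\log(L_{\max}-L_{\min})-\log\alpha=\log p\,(1+o(1))$ by $\log(L_{\max})/\log(p)\to 0$, and the correction term $\sqrt{2}\gamma_p/(L_{\min}\log p)^{1/4}=O((\log p)^{3/4}/L_{\min}^{1/4})=o(\sqrt{\log p})$ by $L_{\min}/\log(p)\to\infty$. Hence the gap $\Delta:=M-h\ge c_0M$ for some fixed $c_0>0$ eventually, and in particular $\Delta\to\infty$.

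Next I would bound $|T_1|,|T_2|\le\Delta/2$ with probability tending to $1$ via Chebyshev. Diagonalizing $\V\Sigma_{I^*}$ shows $\V\epsilon^T\V\epsilon-\tr(\V\Sigma_{I^*})$ is a centered mixture of $\chi^2_1$'s with variance $2\|\V\lambda_{I^*}\|_2^2$, so $\Var(T_2)=1$ and $T_2=O_p(1)=o_p(\Delta)$. For the linear term, condition (A) gives $\V\mu_{I^*}^T\V\Sigma_{I^*}\V\mu_{I^*}\le\|\V\lambda_{I^*}\|_\infty\,\|\V\mu_{I^*}\|_2^2\le K_0\|\V\mu_{I^*}\|_2^2$, so $\Var(T_1)\le 2K_0\|\V\mu_{I^*}\|_2^2/\|\V\lambda_{I^*}\|_2^2$. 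Substituting $\|\V\mu_{I^*}\|_2^2=\sqrt{2}\,M\|\V\lambda_{I^*}\|_2$ and the trivial lower bound $\|\V\lambda_{I^*}\|_2\ge c\sqrt{|I^*|}\ge c\sqrt{L_{\min}}$ (which follows from $\sigma_i^2\ge c$ since $\|\V\lambda_{I^*}\|_2^2=\tr(\V\Sigma_{I^*}^2)\ge\sum_i\Sigma_{ii}^2$) gives $\Var(T_1)\le 2\sqrt{2}K_0M/(c\sqrt{L_{\min}})$. Combined with $\Delta\gtrsim M$ this yields
\begin{equation*}
\P\!\left(|T_1|>\Delta/2\right)\;\le\;\frac{4\Var(T_1)}{\Delta^2}\;\lesssim\;\frac{1}{M\sqrt{L_{\min}}}\;\lesssim\;\frac{1}{\sqrt{L_{\min}\log p}}\;\to\;0,
\end{equation*}
and combining the two fluctuation bounds gives $\P(T_1+T_2\le-\Delta)\to 0$, proving $\P(Q(I^*)>h)\to 1$.

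The main subtlety is to keep the argument uniform in the signal strength: one must not assume any upper bound on $M$, since $\|\V\mu_{I^*}\|_2^2$ can in principle be arbitrarily large. The saving observation is that the target gap $\Delta$ is always at least a fixed fraction of $M$ once $p$ is large enough, while $\Var(T_1)$ grows only linearly in $M$, so the signal-to-noise ratio $\Delta^2/\Var(T_1)\gtrsim M\sqrt{L_{\min}}$ diverges no matter how strong the signal is. A secondary care point is step four: converting the finite-sample bound in Theorem~\ref{prop:thres-upper} into the clean asymptotic $h=\sqrt{2\log p}(1+o(1))$ requires condition (B) to neutralize both $\log L_{\max}$ inside $\gamma_p$ and the $(L_{\min}\log p)^{1/4}$ denominator in the correction term.
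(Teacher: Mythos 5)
Your proof is correct and follows essentially the same route as the paper: both control the gap between the noncentrality $\|\V\mu_{I^*}\|_2^2/(\sqrt{2}\|\V\lambda_{I^*}\|_2)$ and the threshold (the paper bounds $h$ via Theorem~\ref{th-multi-scan}, you via Theorem~\ref{prop:thres-upper}; either works under (B)) and then apply Chebyshev using the variance bound $4\|\V\lambda_{I^*}\|_\infty\|\V\mu_{I^*}\|_2^2+2\|\V\lambda_{I^*}\|_2^2$, which you simply split into its linear and quadratic pieces instead of bounding the sum at once.
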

The proof of Theorem 3 is given in the Appendix. Condition (D) imposes on the signal strength of the signal region. This condition is similar to the condition assumed in \cite{jeng2010optimal} and ensures that the signal region $I^*$ will be selected in the candidate set $\mathcal{I}^{(1)}$. For each signal variant $i$, by our definition, $\mu_i = \E(U_i) $ has the same convergence rate as $\sqrt{n}$, where $n$ is the sample size. In GWAS or WGS, the sample size is often large and thus condition (D) is reasonable in reality. Theorem \ref{th-power-weak} could consistently select a signal region that overlaps with the true signal region whose overall signal strength in sense of $L_2$ norm is sufficiently large.

To show the consistency of signal region detection, we first introduce a quantity to measure the accuracy of an estimator of a signal segment. For any two regions $I_1$ and $I_2$, define the Jaccard index between $I_1$ and $I_2$ as $J(I_1,I_2) = |I_1 \cap I_2| / |I_1 \cup I_2|$. It is obvious that $0 \leq J(I_1,I_2) \leq 1$, and $J(I_1,I_2)=1$ indicating complete identity and $J(I_1,I_2)$ indicating disjointness. Let $\hat{\mathcal{I}} = \{ \hat{I}_1,\hat{I}_2,\cdots\}$ be a collection of estimated signal regions, we define region $I^*$ is consistently detected if for some $\eta_p = o(1)$, there exists $\hat{I} \in \hat{\mathcal{I}}$ such that
\begin{equation*}
\P\big(J(\hat{I},I^*) \geq 1 - \eta_p \big) \rightarrow 1.
\end{equation*}
The following theorem shows that the proposed Q-SCAN could consistently detect existence and locations the signal region $I^*$ under some regularity conditions.
\begin{theorem}\label{th-power-strong}
Assume conditions (A)-(D) in Theorem \ref{th-power-weak} and the following condition (E) hold,
\begin{itemize}
\item[(E)] $\inf_{I \subsetneq I^*} \frac{\log(||\V\mu_{I^*}||_2^2) - \log(||\V\mu_I||_2^2)}{\log(||\V\lambda_{I^*}||_2) - \log(||\V\lambda_I||_2)} > 1$,
\end{itemize}
then
\begin{equation*}
\P\big(J(\hat{I},I^*) \geq 1 - \eta_p\big) \rightarrow 1,
\end{equation*}
for any $\eta_p$ that satisfies  $\big\{\frac{\log(L_{\max})}{\log(p)}\big\}^{\frac{1}{4}} \ll \eta_p \ll 1$.
\end{theorem}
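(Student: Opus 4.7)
The plan is to couple Theorem~\ref{th-power-weak} with a refined comparison of $Q(\hat I)$ and $Q(I^*)$ to pin down the position of the chosen region. Decompose $Q(I) = D(I) + Z(I)$ with deterministic part $D(I) = \|\V\mu_I\|_2^2/(\sqrt{2}\,\|\V\lambda_I\|_2)$ and centered stochastic remainder $Z(I)$. Because $\V\mu$ vanishes outside $I^*$, one has $\|\V\mu_I\|_2^2 = \|\V\mu_{I\cap I^*}\|_2^2$ for every candidate interval $I$, which is the simplification exploited below. First I would verify that the algorithm's initial selection $\hat I := \hat I_1$ overlaps $I^*$: by Theorem~\ref{th-power-weak}, $I^* \in \mathcal{I}^{(1)}$ with probability tending to one, so $Q(\hat I) \geq Q(I^*)$; moreover Condition~(D) gives $Q(I^*) \geq 2(1+\epsilon_0)\sqrt{\log p}\,(1+o_p(1))$, which strictly exceeds the uniform bound $(1+o(1))\sqrt{2\log p}$ from Theorem~\ref{th-multi-scan} applied to the subfamily of candidate regions disjoint from $I^*$. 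It therefore suffices to show $J(\hat I, I^*) \geq 1 - \eta_p$ with probability tending to one.

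Next I would perform a case analysis bounding $D(I^*) - D(\hat I)$ from below in terms of $\delta := 1 - J(\hat I, I^*)$. When $\hat I \supsetneq I^*$, the signal numerator is unchanged while $\|\V\lambda_{\hat I}\|_2/\|\V\lambda_{I^*}\|_2 \geq \sqrt{1 + c_1\delta}$, using Condition~(A) together with $\sigma_i^2 \geq c$; this gives $D(I^*) - D(\hat I) \gtrsim \delta\, D(I^*) \gtrsim \delta\sqrt{\log p}$. When $\hat I \cap I^* \subsetneq I^*$, Condition~(E) supplies an $\epsilon > 0$ with $\|\V\mu_{\hat I \cap I^*}\|_2^2 / \|\V\mu_{I^*}\|_2^2 \leq (\|\V\lambda_{\hat I \cap I^*}\|_2/\|\V\lambda_{I^*}\|_2)^{1+\epsilon}$, which after substitution and another use of Condition~(A) produces the same lower bound. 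The mixed case where $\hat I$ simultaneously drops variants of $I^*$ and appends non-signal variants is handled by chaining through the intermediate region $\hat I \cap I^*$.

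The final step is to control the stochastic discrepancy $Z(\hat I) - Z(I^*)$ uniformly over intervals overlapping $I^*$. After standardisation, the numerator of $Z(\hat I) - Z(I^*)$ depends only on the variants in the symmetric difference $\hat I \triangle I^*$, whose cardinality is proportional to $\delta |I^*|$; splitting $Z(I)$ into a weighted chi-square piece and a signal--noise cross term $G(I) = \sqrt{2}\,\V\mu_I^\top \tilde{\V U}_I/\|\V\lambda_I\|_2$, a Hanson--Wright-type sub-exponential concentration bound combined with Condition~(D) (which converts $\|\V\mu_{I^*}\|_2^2$ into $\sqrt{\log p}\,\|\V\lambda_{I^*}\|_2$) and a union bound over the $O(L_{\max}^2)$ candidate intervals yields a uniform bound of order $\delta^{1/2}(\log p \cdot \log L_{\max})^{1/4}$. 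Combining with the signal lower bound from the previous step, $Q(\hat I) \geq Q(I^*)$ then forces $\delta\sqrt{\log p} \lesssim \delta^{1/2}(\log p \log L_{\max})^{1/4}$, hence $\delta \lesssim (\log L_{\max}/\log p)^{1/4}$, contradicting the hypothesis $\eta_p \gg (\log L_{\max}/\log p)^{1/4}$. The principal obstacle is precisely this last step: the cross term $G(I)$ has variance boosted by the signal strength, so obtaining the sharp fourth-root rate requires Conditions~(A), (D) and (E) to be deployed jointly, with the union bound taken over a carefully tuned neighbourhood net of $I^*$ rather than all $O(pL_{\max})$ candidate intervals.
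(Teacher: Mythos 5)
Your proposal is correct and follows essentially the same route as the paper's Appendix proof: it forces $\hat I\cap I^*\neq\emptyset$ via Theorems \ref{th-multi-scan} and \ref{th-power-weak}, lower-bounds the deterministic drift $D(I^*)-D(I)$ by a constant times $\eta_p\sqrt{\log p}$ using conditions (A) and (E) in the over- and under-covering cases, and controls the weighted chi-square and signal--noise cross fluctuations by Laurent--Massart/Hanson--Wright and Gaussian tail bounds with a union bound over the $O(L_{\max}^2)$ intervals meeting $I^*$. The only imprecision is your remark that $Z(I)-Z(I^*)$ depends only on $I\,\triangle\, I^*$ --- it also contains terms supported on $I\cap I^*$ carrying the small factor $1-\|\V\lambda_I\|_2/\|\V\lambda_{I^*}\|_2$ (the paper's $A_2$ and $A_4$) --- but your identification of the signal--noise cross term as the delicate piece requiring (A), (D), (E) jointly is exactly how the paper handles that contribution.
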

The proof of Theorem \ref{th-power-strong} is provided in the Appendix. Condition (E) specifies the properties of the overall signal strength that a signal region needs to satisfy in order for it to be  consistently detected by the Q-SCAN procedure. This definition allows a signal region to consist of both signal and neutral variants, which is more realistic and commonly the case in GWAS and WGS. This condition is implicitly assumed when signals have the same strength and tests are independent. However this common strength assumption that is suitable for copy number variation studies is inappropriate for GWAS and WGS. Condition (E) also holds when the tests are independent and the sparsity parameter is constant in the signal region. To be specific, let $s(I)$ be the number of signals in region $I$, that is, the number of $\mu_i$'s that are not zero in region $I$. Assume $s(I) = p_{I}^{\xi(I)}$, where $\xi(I)=\xi^*$ is the sparsity parameter of region $I$. Although signals are sparse across the genome, we assume that signals are dense in the signal region \citep{donoho2004higher,wu2011rare} and hence $\xi^* > 1/2$. Then, for any $I \subsetneq I^*$, we have $\{\log(||\V\mu_{I^*}||_2^2\} - \log(||\V\mu_I||_2^2))/\{\log(||\V\lambda_{I^*}||_2) - \log(||\V\lambda_I||_2)\} = 2\xi^* > 1$ and thus condition (E) holds.

The results in Theorem \ref{th-power-strong} show that the proposed quadratic scan procedure is consistent for estimating a signal region, and its consistency depends on the signals only through their $L_2$ norm.  This indicates that the direction and  sparsity of the signals in a signal region do not affect  the consistency of the proposed scan procedure. When marginal test statistics are independent and signals have the same strength in the signal region, i.e., $\mu_i=\mu$ for all $i \in I^*$, \cite{jeng2010optimal} developed a theoretically optimal likelihood ratio selection procedure based on the mean scan statistic (\ref{eq:mean scan statistics}).  For the likelihood ratio selection procedure to consistently detect the signal region $I^*$, the condition on $\mu$ is $\mu \geq \sqrt{2(1+\delta_p)\log(p)}/\sqrt{|I^*|}$ for some $\delta_p$ such that $\delta_p\sqrt{\log p} \rightarrow \infty$. It means that $||\V\mu_{I^*}||_2^2 \geq 2(1+\delta_p)\log(p)$. Because $|I^*|/\log(p) \rightarrow \infty$, this condition is weaker than condition (D) in Theorem \ref{th-power-strong}, which is $||\V\mu_{I^*}||_2^2 \geq 2(1+\epsilon_0)\sqrt{|I^*|\log(p)}$ for this situation. However, it is obvious that the quadratic procedure has more power than the mean scan procedure \citep{jeng2010optimal} in the presence of both trait-increasing and trait-decreasing variants in the signal region. The quadratic scan procedure is also more powerful in the presence of weak or neutral variants in the signal region. We will illustrate this in finite sample simulation studies in Section \ref{sec:simulation}.

\section{Simulation Studies}\label{sec:simulation}
\subsection{Family-wise Error Rate for Quadratic Scan Procedure}\label{subsec:fwer control}
In order to validate the proposed quadratic scan procedure in terms of protecting family-wise error rate using the empirical threshold, we estimated the family-wise error rate through simulation. To mimic WGS data, we generated sequence data by simulating 20,000 chromosomes for a 10 Mb region on the basis of the calibration-coalescent model that mimics the linkage disequilibrium structure of samples from African Americans using COSI \citep{sanda2008quality}. The simulation used the 10 Mb sequence to represent the whole genome and focused on low frequency and rare variants with minor allele frequency less than $0.05$. The total sample size $n$ is set to be $2,500$, $5,000$ or $10,000$ and the corresponding number of variants in the sequence are $189,597$, $242,285$ and $302,737$, respectively. We first consider the continuous phenotype generated from the model:
\begin{equation*}
\V Y = 0.5 \V X_1 + 0.5 \V X_2 + \V\epsilon,
\end{equation*}
where $\V X_1$ is a continuous covariate generated from a standard normal distribution, $\V X_2$ is a dichotomous covariate taking values $0$ and $1$ with a probability of $0.5$, and $\V\epsilon$ follows a standard normal distribution. We selected the minimum searching window length $L_{\max}=40$ and the maximum searching window length $L_{\max}=200$. We scan the whole sequence for controlling the family-wise error rate at $0.05$ and $0.01$ level. The simulation was repeated for $10,000$ times.

We also conducted the family-wise error rate simulations for dichotomous phenotypes using similar settings except that the dichotomous outcomes were generated via the model:
\begin{equation*}
\text{logit}\big\{\P(Y_i=1)\big\} = -4.6 + 0.5 \V X_1 + 0.5 \V X_2, ~~i=1,\cdots,n,
\end{equation*}
which means the prevalence is set to be 1\%. Case-control sampling was used and the numbers of cases and controls were equal. The sample sizes were the same as those used for continuous phenotypes.

For both continuous and dichotomous phenotype simulations, we applied the proposed Q-SCAN procedure and M-SCAN procedure to each of the $10,000$ data sets. To control for LD, the mean scan statistic for region $I$ is defined as
\begin{equation}\label{eq:mean_scan_LD}
M(I) = \big(\sum_{i \in I} U_i\big)^2/ var\big(\sum_{i \in I} U_i\big).
\end{equation}

The empirical family-wise type I error rates estimated for Q-SCAN and M-SCAN are presented in Table \ref{tab:fwer} for $0.05$ and $0.01$ levels, respectively. The family-wise error rate is accurate at both two significance levels and all the empirical family-wise error rate fall in the $95\%$ confidence interval of  the 10,000 Bernoulli trials with probability $0.05$ and $0.01$. These results showed that both Q-SCAN procedure and M-SCAN procedure are valid methods and protect the family-wise error rate.

\subsection{Power and Detection Accuracy Comparisons}\label{subsec:power}
In this section, we performed simulation studies to investigate the power of the proposed Q-SCAN procedure in finite samples, and compared its performance with the M-SCAN procedure using scan statistic (\ref{eq:mean_scan_LD}). Genotype data was generated in the same fashion as Section \ref{subsec:fwer control}. The total sample sizes were set as $n=2,500$, $n=5,000$ and $n=10,000$. We generated continuous phenotypes by
\begin{equation*}
\V Y = 0.5 \V X_1 + 0.5 \V X_2 + \V G_1^c\beta_1 + \cdots + \V G_s^c\beta_s + \V\epsilon,
\end{equation*}
where $\V X_1$, $\V X_2$ and $\V\epsilon$ are the same as those specified in the family-wise error rate simulations, $\V G_1^c,\cdots,\V G_{s}^c$ are the genotypes of the $s$ causal variants and $\beta$s are the log odds ratio for the causal variants. We randomly selected two signal regions across the 10 Mb sequence in each replicate and repeated the simulation 1,000 times. The number of variants in each signal region $p_0$ was randomly selected from 50 to 80. Causal variants were selected randomly within each signal region. Assume $\xi$ is the sparsity index, that is, $s = p_0^\xi$. We considered two sparsity settings $\xi=2/3$ and $\xi=1/2$. Each of causal variant has an effect size as a decreasing function of MAF, $\beta=c\log_{10}$(MAF). We set $c=0.185$ for $\xi=2/3$ and $c=0.30$ for $\xi=1/2$. We also considered three settings of effect direction: the sign of $\beta_i$'s are randomly and independently set as 100\% positive (0\% negative), 80\% positive (20\% negative), and 50\% positive (50\% negative).

To evaluate power for these methods, we considered two criteria, the signal region detection rate and the Jaccard index as performance measurements. Let $I_1$ and $I_2$ be the two signal regions and $\{\hat{I}_j\}$ be a collection of signal regions. The signal region detection rate is defined as
\begin{equation*}
\frac{1}{2} \Big [\V 1\big\{I_1\cap(\cup \hat{I}_j) \neq \emptyset \big\} + \V 1\big\{I_2\cap(\cup \hat{I}_j) \neq \emptyset \big\}\Big ],
\end{equation*}
where $\V 1(\cdot)$ is an indicator function. Here we define the signal region as detected if it is overlapped with one of the signal regions. For the Jaccard index, we define it as
\begin{equation*}
\frac{1}{2} \Big\{\max_j J(I_1,\hat{I}_j) + \max_j J(I_2,\hat{I}_j)\Big\},
\end{equation*}
where $J(\cdot,\cdot)$ is defined in Section \ref{subsec:Power analysis}.

\text{Figure \ref{fig:Q_SCAN_2_3}} summarizes the simulation results when the sparsity index is equal to $2/3$. In this situation, the Q-SCAN procedure had a better performance for detecting signal regions than the M-SCAN procedure when the effect of causal variants are in different directions, and was comparable to the M-SCAN procedure when the effects of causal variants are in the same direction. Specifically, when the effects of causal variants are in different directions, the Q-SCAN procedure had higher signal region detection rates and the Jaccard index than the M-SCAN procedure. The difference was more appreciable when the proportion of variants with negative effects increased from 20\% to 50\%. Although the M-SCAN procedure had  higher signal region detection rates and  Jaccard index than the Q-SCAN procedure when the effects of causal variants are in the same direction, the difference decreased as the sample size increased. These results indicated that the performance of the Q-SCAN procedure was robust to the direction of effect sizes, while the M-SCAN procedure loses power when the effect sizes are in different directions.

\text{Figure \ref{fig:Q_SCAN_1_2}} summarizes the simulation results when the sparsity index is equal to $1/2$. In this situation, the Q-SCAN procedure performed better than M-SCAN for detecting signal regions in the sense that Q-SCAN had higher signal region detection rate and Jaccard index. Different from the case that $\xi=2/3$, when signal variants are more sparse in a signal region, the Q-SCAN procedure also has a higher signal region detection rate and Jaccard index than the M-SCAN procedure when the effects of causal variants are in the same direction.

In summary, our simulation study illustrates that Q-SCAN has an advantage in signal region identification over M-SCAN, especially in the presence of signal variants that have effects in different directions or  neutral variants among variants in signal regions. We also did simulations for multiple effect sizes and for sparsity index $\xi=3/4$. The results were similar and could be found in the Supplementary Materials (\text{Figure S1}-\text{Figure S4}).

\section{Application to the ARIC Whole Genome Sequencing (WGS) Data}\label{sec:realdata}
In this section, we analyzed the ARIC WGS study conducted at the Baylor College of Medicine Human Genome Sequencing Center. DNA samples were sequenced at 7.4-fold average depth on Illumina HiSeq instruments. We were interested in detecting genetic regions that were associated with two quantitative traits, small, dense, low-density lipoprotein cholesterol (LDL) and neutrophil count, both of which are risk factors of cardiovascular disease. After sample-level quality control \citep{morrison2017practical}, there are 55 million variants observed in 1,860 African Americans (AAs) and 33 million variants observed in European Americans (EAs). Among these variants, 83\% and 80\% are low-frequency and rare variants (Minor Allele Frequency (MAF) $<$5\%) in AAs and EAs, respectively. For this analysis, we focused on analyzing these low-frequency and rare variants across the whole genome.

To illustrate the proposed Q-SCAN procedure, we compared the performance of the Q-SCAN procedure with the Mean scan procedure M-SCAN and the SKAT \citep{wu2011rare} conducted using a sliding window approach with fixed window sizes. Following \cite{morrison2017practical}, for the sliding window approach, we used the sliding window of length 4 kb and began at position 0 bp for each chromosome, utilizing a skip length of 2 kb, and we tested for the association between variants in each window and the phenotype using SKAT. We adjusted for age, sex, and the first three principal components of ancestry in the analysis for both traits and additionally adjusted for current smoking status in the analysis of neutrophil count, consistent with the procedure described in \cite{morrison2017practical}. Because the distribution of both LDL and neutrophil count are markedly skewed, we transformed them using the rank-based inverse normal-transformation following the standard GWAS practice \citep{Barber}, see (\text{Figure S5} and \text{Figure S6}).

For both Q-SCAN and M-SCAN procedures, we set the range of searching window sizes by specifying the minimum and maximum numbers of variants $L_{\min}=40$ and $L_{\max}=200$. We controlled the family-wise error rate (FWER) at the 0.05 level in both Q-SCAN and M-SCAN analyses using the proposed empirical threshold. For the sliding window procedure, following \cite{morrison2017practical}, we required a minimum number of 3 minor allele counts in a 4 kb window with a skip of length of 2 kb, which results in a total of $1,337,673$ and $1,337,382$ overlapping windows in AA and EA, respectively. As around 1.3 million windows were tested using the sliding window procedure, we used the Bonferroni method to control for the FWER at the $0.05$ level in the sliding window method following the GWAS convention. We hence set the region-based significance threshold for the sliding window procedure at $3.75\times10^{-8}$ (approximately equal to $0.05/1,337,000$). We note that both Q-SCAN and M-SCAN directly control for the FWER without the need of further multiple testing adjustment.

Q-SCAN detected a signal region of $4,501$ basepairs (from $45,382,675$ to $45,387,175$ bp on chromosome 19) consisting of 58 variants that had a significant association with LDL among EAs with the family-wise error rate 0.005.  This region resides in \textit{NECTIN2} and covers three uncommon variants with individual p-values less than $1\times 10^{-6}$, including rs4129120 with $p=8.47\times 10^{-9}$ and MAF$=0.036$, rs283808 with $p=5.71\times 10^{-7}$ and MAF=0.042, and rs283809 with $p=5.71\times 10^{-7}$ and MAF$=0.042$.  Although the variant rs4129120 was significant at level $5\times 10^{-8}$, there are $9,367,575$ variants with MAF$ \geq 0.01$ across the genome and hence the family-wise error rate estimated by Bonferroni correction was 0.079, which was much larger than that of the region detected by Q-SCAN. Several common variants in \textit{NECTIN2} have been found to have a significant association with LDL in previous studies \citep{talmud2009gene,postmus2014pharmacogenetic}.

The M-SCAN procedure did not detect any signal segment associated with LDL to reach genome-wide significance when we controlled for the family-wise error rate at 0.05.  Examination of the data show that the variant effects had different directions and were mixed with neutral variants in the signal region detected by Q-SCAN (the region from $45,382,675$ to $45,387,175$ bp on chromosome 19). The 4 kb sliding window approach using SKAT, which was applied in previous studies \citep{morrison2017practical}, also did not detect any significant windows. Specifically, none of the 4 kb sliding windows covered all of the three variants rs4129120, rs283808 and rs283809, and the SKAT p-values of the two sliding windows that cover variant rs4129120 were $6.6\times 10^{-6}$ and $9.2\times 10^{-7}$, respectively. In contrast, the SKAT p-value of the signal region detected by Q-SCAN was $1.87 \times 10^{-9}$. This indicated that our procedure increased the power for detecting signal regions by estimating the locations of signal regions more accurately. These results explain why our Q-SCAN procedure is more powerful than the M-SCAN procedure and the sliding window procedure using a fixed window size in the analysis of LDL in ARIC WGS data.

We next performed WGS association analysis of neutrophil counts among AAs using Q-SCAN and the existing methods. \text{Figure \ref{fig:Neu-landscape}} summarizes the genetic landscapes of the windows that are significantly associated with neutrophil counts among AAs. All of the significant regions resided in a 6.6 Mb region on chromosome 1. Compared with M-SCAN and the sliding window procedure using SKAT, the significant regions detected by Q-SCAN not only covered the significant regions  detected by these existing methods, but also detected several new regions that were missed by these two methods (Table S1-S3).  Q-SCAN detected 10 novel significant regions, which are separate from the regions detected by  M-SCAN and the sliding window method. For example, the region from 156,116,488 bp to 156,119,540 bp on chromosome 1 showed a significant association with neutrophil counts using Q-SCAN, but was missed by the other two methods. This significant region is 500 kb away from the significant region detected by the other two methods, and resides in gene \textit{SEMA4A}, which has previously shown to have common variants associated with neutrophil related traits \citep{charles2018analyses}.  In summary, compared to the existing methods, the Q-SCAN procedure not only enables the identification of more significant findings, but also is less likely to miss important signal regions.

\section{Discussions}\label{sec:conclusion}
In this paper, we propose a quadratic scan procedure to detect the existence and the locations of signal regions in whole genome array and sequencing studies. We show that the proposed quadratic scan procedure could control for the family-wise error rate using a proper threshold. Under regularity conditions, we also show that our procedure can consistently select the true signal segments and estimate their locations. Our simulation studies  demonstrate that the proposed procedure has a better performance than the mean-based scan method in the presence of variants effects in different directions, or mixed signal variants and neutral variants in signal regions.  An analysis of WGS and heart-and blood-related traits from the ARIC study illustrates the advantages of the proposed Q-SCAN procedure for rare-variant WGS analysis, and demonstrates that Q-SCAN detects the locations and the sizes of signal regions associated with LDL and neutrophil counts  more powerfully and precisely.

The computation time of Q-SCAN is linear in the number of variants in WGS and the range of variants number in searching windows, and hence is computationally efficient. To analyze a 10 Mb region sequenced on 2,500, 5,000 or 10,000 individuals, when we selected the number of variants in searching windows between 40 and 200, the proposed Q-SCAN  procedure took 15, 19 and 25 minutes, respectively, on a 2.90GHz computing cores with 3 Gb memory. Using the same computation core, Q-SCAN took 20 hours to analysis the whole genome of EA individuals from ARIC whole-genome sequencing data. The Q-SCAN procedure also works for parallel computing. Analyzing the whole genome for 10,000 individuals only requires 75 minutes if using 100 computation cores.

We derive an empirical threshold based on Monte Carlo simulations to control for the family-wise error rate at an exact $\alpha-$ level and give the asymptotic bound of the proposed threshold. This step costs additional computation time in applying our procedure. Future research is needed to develop an analytic approximation to the significance level for the proposed Q-SCAN statistics. We allow in this paper individual variant test statistics to be correlated, and assume $M_p-$dependence.  This is a reasonable assumption in WGS association studies, because two marginal test statistic are independent when two variants are far apart in the genome. It is of future research interest to extend our procedure to more general correlation structures.

We assume in this paper all the variants have the same weight in constructing the quadratic scan statistic. In WGS studies, upweighting rare variants and functional variants could boost power when  causal variants are likely to be more rare and functional. It is of great interest to extend our proposed scan procedure by weighting individual test statistics using  external bioinformatic  and evolutionary biology knowledge, such as variant functional information when applying to WGS studies. We assume individual variant test statistics are asymptotically jointly normal. However, when most of variants are rare variants in the sequence, this normal assumption might not hold in finite samples for binary traits. An interesting problem of future research is to extend the results to the situation where individual variant test statistics are not normal and use the exact or approximate  distributions of individual test statistics to construct scan statistics.

\bibliographystyle{apalike}
\bibliography{myrefs}

\renewcommand{\theequation}{A. \arabic{equation}}

\setcounter{equation}{0}
\section*{Appendix}
Due to space limitation, we only present a sketch of the proofs here. The detailed proofs can be found in the  Supplemental Materials. In order to prove the Theorems, we introduce the following three lemmas first.
\begin{lemma}[Refinement of Bernstein's Inequality \citep{petrov1968asymptotic}]\label{lemma:bernstein}
Consider a sequence of independent random variables $\{\V X_j\},j=1,2,\cdots$. Assume $\E(\V X_j) = 0$, $\E(\V X_j^2) = \sigma_j^2$ and $L_j(z) = \log(\E(\exp(z\V X_j)))$. We introduce the following notation:
\begin{equation*}
\V S_n = \sum_{j=1}^n X_j, \quad B_n = \sum_{j=1}^n \sigma_j^2, \quad F_n(x) = \P(\V S_n < x\sqrt{B_n}), \quad \Phi(x) = \frac{1}{\sqrt{2\pi}}\int_{-\infty}^x \exp(-\frac{t^2}{2}) dt. \end{equation*}
If the following three conditions hold:
\begin{itemize}
\item[(1)] There exists positive constants $A,c_1,c_2,\cdots$ such that $|L_j(z)| \leq c_j$ for $|z|<A$, $j=1,2,\cdots$,
\item[(2)] $\limsup_{n\rightarrow \infty} \frac{1}{n}\sum_{j=1}^n c_j^{\frac{3}{2}} < \infty$,
\item[(3)] $\liminf_{n \rightarrow \infty} \frac{B_n}{n} > 0$,
\end{itemize}
then there exists a positive constant $w$ such that for sufficiently large $n$,
\begin{equation*}
1 - F_n(x) = [1-\Phi(x)]\exp\big\{\frac{x^3}{\sqrt{n}}\lambda_n(\frac{x}{\sqrt{n}})\big\}(1+l_1w)
\end{equation*}
\begin{equation*}
F_n(-x) = \Phi(-x) \exp\{-\frac{x^3}{\sqrt{n}}\lambda_n(-\frac{x}{\sqrt{n}})(1+l_2w)\}
\end{equation*}
in the region $0 \leq x \leq w\sqrt{n}$. Here $|l_1|\leq l$ and $|l_2| \leq l$ being some constant.
\end{lemma}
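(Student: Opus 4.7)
The plan is to prove this via the classical Cramér--Esscher exponential tilting technique, which is the standard route for refined large deviation expansions. The target expression $[1-\Phi(x)]\exp\{(x^3/\sqrt{n})\lambda_n(x/\sqrt{n})\}(1+l_1 w)$ is the Gaussian tail corrected by a ``Cramér series'' in $x/\sqrt{n}$; it will emerge by choosing an exponentially tilted probability measure whose mean matches $x\sqrt{B_n}$ and then reducing the resulting tail probability to a local limit problem under the tilt.

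Step 1 (exponential tilting). For each $j$ and a real parameter $h$ with $|h|<A$, introduce the Esscher-tilted law $d\widetilde{P}_{j,h}(y) = \exp\{hy - L_j(h)\}\,dP_j(y)$. Under the product tilt, the cumulant generating function of $X_j$ becomes $L_j(h+z)-L_j(h)$, so the tilted mean of $X_j$ is $L_j'(h)$ and the tilted variance is $L_j''(h)$. Condition (1) guarantees $L_j$ is analytic in $|z|<A$ with uniformly bounded derivatives at $0$ via Cauchy estimates, and conditions (2)--(3) give that the tilted variances $\sum_j L_j''(h)$ grow like $B_n$ uniformly in a fixed neighborhood of $h=0$.

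Step 2 (choosing the tilt and reducing). For a target deviation $x\sqrt{B_n}$ with $0 \le x \le w\sqrt{n}$ and $w$ small, use the implicit function theorem to solve $\sum_j L_j'(h)=x\sqrt{B_n}$ for $h=h_n(x)$. Since $\sum_j L_j'(0)=0$ and $\sum_j L_j''(0)=B_n$, this yields $h_n(x) = x/\sqrt{B_n} + O(x^2/n)$ with an expansion whose coefficients are rational functions of the cumulants $\kappa_{j,k}=L_j^{(k)}(0)$; the radius of convergence is bounded below uniformly in $n$ thanks to (1)--(3). The identity
\begin{equation*}
\P(\V S_n \ge x\sqrt{B_n}) = \exp\{\Lambda_n(h) - h x\sqrt{B_n}\}\,\widetilde{\E}_h\bigl[\exp\{-h(\V S_n - x\sqrt{B_n})\}\mathbf{1}\{\V S_n \ge x\sqrt{B_n}\}\bigr]
\end{equation*}
with $\Lambda_n(z)=\sum_j L_j(z)$ then reduces the one-sided tail bound to evaluating the tilted expectation of a bounded functional.

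Step 3 (expanding the prefactor and the residual integral). A Taylor expansion gives $\Lambda_n(h_n(x)) - h_n(x)\,x\sqrt{B_n} = -x^2/2 + (x^3/\sqrt{n})\lambda_n(x/\sqrt{n})$, where $\lambda_n(t)$ is the Cramér series whose coefficients are determined by the first several cumulants of the summands; this identifies the exponential correction factor in the lemma statement. Under the tilted law the variables are independent with zero mean after centering and variances summing to $B_n(1+o(1))$, so a Berry--Esseen/local integral lemma applied after centering yields $\widetilde{\E}_h[\exp\{-h(\V S_n - x\sqrt{B_n})\}\mathbf{1}\{\V S_n \ge x\sqrt{B_n}\}] = (h\sqrt{2\pi B_n})^{-1}(1+O(n^{-1/2}))$. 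Since $h_n(x)$ is of order $x/\sqrt{B_n}$, this prefactor matches the Gaussian Mills ratio $1/(x\sqrt{2\pi})$ and, combined with the standard asymptotic $1-\Phi(x)\sim e^{-x^2/2}/(x\sqrt{2\pi})$, delivers the claimed refinement. The companion formula for $F_n(-x)$ follows by applying the same argument to $-\V S_n$, whose cumulant generating functions $\log\E\exp(-zX_j)$ satisfy the same three hypotheses with the same bounds.

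The main obstacle is making the entire expansion uniform in $x$ on the range $0 \le x \le w\sqrt{n}$. Two things must be controlled uniformly: (a) the convergence of the Cramér series $\lambda_n$, which requires $|h_n(x)|<A$ and forces $w$ small enough that $h_n(x) \lesssim w\sqrt{n/B_n}$ stays inside the analyticity strip --- this is where condition (3) on $\liminf B_n/n$ is essential --- and (b) the Berry--Esseen error for the tilted sum, which needs uniform bounds on third absolute moments under $\widetilde{P}_h$; these follow from condition (2) on $n^{-1}\sum_j c_j^{3/2}$ via Cauchy estimates applied to the tilted cumulants $L_j^{(k)}(h)$ on $|h|<A/2$. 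Combining (a) and (b) yields an $O(1)$ relative error that can be absorbed into the multiplicative factor $(1+l_1 w)$ with $|l_1|\le l$, completing the proof.
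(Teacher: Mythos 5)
The paper never proves this lemma: it is imported verbatim from Petrov (1968) and used as a black box in the proof of the fixed-length scan result in the Appendix, so there is no internal argument to compare yours against. Judged on its own, your Cram\'er--Esscher tilting plan is the correct and standard route, and it is essentially how the cited result is actually proved: tilt each summand, solve $\sum_j L_j'(h) = x\sqrt{B_n}$ for the conjugate parameter $h$, read off the Cram\'er series $\lambda_n$ from the expansion of $\Lambda_n(h) - hx\sqrt{B_n}$ around $-x^2/2$, and evaluate the residual tilted integral by a Berry--Esseen argument. Two points need care in a full write-up. First, your local-limit evaluation of the tilted expectation as $(h\sqrt{2\pi B_n})^{-1}\bigl(1+O(n^{-1/2})\bigr)$ degenerates as $x \downarrow 0$: there $h \to 0$ and the tilted expectation tends to $1/2$ rather than diverging, so the range $0 \le x \le x_0$ for a fixed $x_0$ must be handled separately by applying Berry--Esseen directly to $F_n$, which is harmless because $1-\Phi(x)$ is bounded away from zero on that range. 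Second, over the full range $0 \le x \le w\sqrt{n}$ the combined relative error from truncating the Cram\'er series and from the tilted Berry--Esseen step is of order $(1+x)/\sqrt{n}$, hence of order $w$ at the top of the range rather than $O(n^{-1/2})$; you correctly absorb this into the multiplier $(1+l_1 w)$ with $|l_1|\le l$, which is precisely why the lemma is stated with a bounded multiplicative factor instead of $1+o(1)$. With those two standard patches your outline is a faithful reconstruction of the quoted result.
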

The next two lemmas are two inequalities for the tail probability of normal distribution and chi-square distribution, respectively.
\begin{lemma}[Mills' Ratio Inequality]\label{lemma:mills}
For arbitrary positive number $x>0$, the inequalities
\begin{equation*}
\frac{x}{1+x^2} \exp(-\frac{x^2}{2}) < \int_x^{\infty} \exp(-\frac{u^2}{2}) du < \frac{1}{x}\exp(-\frac{x^2}{2}).
\end{equation*}
\end{lemma}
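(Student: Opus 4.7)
The plan for the upper bound is standard: exploit monotonicity. For $u \ge x > 0$ we have $u/x \ge 1$, so
\[
\int_x^{\infty} e^{-u^2/2}\, du \;\le\; \int_x^{\infty} \frac{u}{x}\, e^{-u^2/2}\, du \;=\; \frac{1}{x}\, e^{-x^2/2},
\]
which gives the right-hand inequality at once.

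For the lower bound I would introduce the auxiliary function
\[
h(x) \;:=\; (1+x^2)\int_x^{\infty} e^{-u^2/2}\, du \;-\; x\, e^{-x^2/2},
\]
and rewrite the desired inequality $\frac{x}{1+x^2} e^{-x^2/2} < \int_x^{\infty} e^{-u^2/2}\, du$ as the statement $h(x) > 0$ for every $x > 0$. A direct differentiation, in which the $-(1+x^2)e^{-x^2/2}$ piece and the $(x^2-1)e^{-x^2/2}$ piece combine to $-2\,e^{-x^2/2}$, gives
\[
h'(x) \;=\; 2x\int_x^{\infty} e^{-u^2/2}\, du \;-\; 2\, e^{-x^2/2},
\]
which by the upper bound just established is strictly negative on $(0,\infty)$. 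Hence $h$ is strictly decreasing there.

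To close the argument I would verify $\lim_{x\to\infty} h(x) = 0$; once that is in hand, strict monotonicity forces $h(x) > h(\infty) = 0$ for all $x > 0$, which is exactly the lower bound. To evaluate the limit without circularity, the plan is to apply integration by parts twice to the Gaussian tail, obtaining
\[
\int_x^{\infty} e^{-u^2/2}\, du \;=\; \Bigl(\tfrac{1}{x} - \tfrac{1}{x^3}\Bigr)e^{-x^2/2} \;+\; 3\int_x^{\infty} u^{-4}\, e^{-u^2/2}\, du,
\]
then multiplying by $(1+x^2)$ and using $(1+x^2)(x^{-1}-x^{-3}) = x - x^{-3}$ to get
\[
h(x) \;=\; -\,x^{-3}\, e^{-x^2/2} \;+\; 3(1+x^2)\int_x^{\infty} u^{-4}\, e^{-u^2/2}\, du,
\]
and finally bounding the remaining integral by pulling $u^{-4}\le x^{-4}$ outside and invoking the upper bound once more. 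Both terms are then $O(e^{-x^2/2}/x^3)$ and vanish.

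There is no genuine obstacle here; Mills' ratio inequality is classical. The only point requiring a touch of care is keeping the verification of $h(\infty)=0$ free of circular reasoning, which is why I would use the two-step integration by parts rather than invoking the lower bound to check its own asymptotic tightness.
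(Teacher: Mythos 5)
Your proof is correct. Note that the paper states this lemma as a classical fact and supplies no proof of its own, so there is nothing to compare against on that side; your argument is a complete, standard derivation and both halves check out: the upper bound via $e^{-u^2/2}\le (u/x)e^{-u^2/2}$ for $u\ge x$, the derivative computation $h'(x)=2x\int_x^{\infty}e^{-u^2/2}\,du-2e^{-x^2/2}<0$, and the integration-by-parts expansion all verify. One small simplification: the double integration by parts is more machinery than the limit requires. Since the upper bound is already in hand, you can bound the first term of $h$ directly by
\begin{equation*}
0\le (1+x^2)\int_x^{\infty}e^{-u^2/2}\,du < \frac{1+x^2}{x}\,e^{-x^2/2}\longrightarrow 0,
\end{equation*}
and the second term $x e^{-x^2/2}$ obviously vanishes, so $h(x)\to 0$ with no circularity. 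Your worry about circularity is only relevant if one tries to use the lower bound itself to control the tail; using the (independently proved) upper bound is perfectly safe, and it collapses the limit verification to one line.
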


\begin{lemma}[Exponential inequality for chi-square distribution\citep{laurent2000adaptive}]\label{lemma:chisq}
Let $\V Y_1,\cdots, \V Y_D$ be i.i.d. standard Gaussian variables. Let $a_1,\cdots,a_D$ be nonnegative constants. Let $\V Z = \sum_{i=1}^D a_i(\V Y_i^2 - 1)$, then the following inequalities hold for any positive x:
\begin{equation*}
\P(\V Z \geq 2||\V a||_2 \sqrt{x} + 2||\V a||_{\infty}x) \leq \exp(-x),
\end{equation*}
\begin{equation*}
\P(\V Z \leq -2||\V a||_2 \sqrt{x} \leq \exp(-x).
\end{equation*}
\end{lemma}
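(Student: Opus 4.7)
The plan is to establish both tail inequalities by the classical Chernoff (exponential moment) method, which reduces the problem to bounding the cumulant generating function $\psi(\lambda) = \log \E[\exp(\lambda \V Z)]$ of $\V Z = \sum_{i=1}^D a_i(\V Y_i^2-1)$ and then optimizing over $\lambda$. By independence of the $\V Y_i$ and the explicit moment generating function of a $\chi_1^2$ variable, for any $\lambda$ with $2\lambda a_i < 1$ for all $i$,
\begin{equation*}
\psi(\lambda) = -\lambda \sum_{i=1}^D a_i - \frac{1}{2}\sum_{i=1}^D \log(1-2\lambda a_i) = \frac{1}{2}\sum_{i=1}^D \bigl[-\log(1-2\lambda a_i) - 2\lambda a_i\bigr].
\end{equation*}

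For the upper tail, I would use the elementary inequality $-\log(1-u) - u \leq u^2/\bigl(2(1-u)\bigr)$ valid for $0 \leq u < 1$, applied with $u = 2\lambda a_i$ for $\lambda \in \bigl(0,1/(2\|\V a\|_\infty)\bigr)$. Summing yields
\begin{equation*}
\psi(\lambda) \;\leq\; \sum_{i=1}^D \frac{\lambda^2 a_i^2}{1-2\lambda a_i} \;\leq\; \frac{\lambda^2 \|\V a\|_2^2}{1-2\lambda \|\V a\|_\infty},
\end{equation*}
where in the last step I replace each $a_i$ in the denominator by the uniform bound $\|\V a\|_\infty$ and factor out $\|\V a\|_2^2 = \sum a_i^2$ in the numerator. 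Chernoff's inequality then gives $\P(\V Z \geq t) \leq \exp\bigl(-\lambda t + \lambda^2 \|\V a\|_2^2/(1-2\lambda\|\V a\|_\infty)\bigr)$ for any such $\lambda$, and the target threshold $t = 2\|\V a\|_2\sqrt{x}+2\|\V a\|_\infty x$ is attained by the explicit choice $\lambda_* = \sqrt{x}\,/\,\bigl(\|\V a\|_2 + 2\|\V a\|_\infty\sqrt{x}\bigr)$. Substituting this $\lambda_*$ (a routine algebraic check: the exponent collapses precisely to $-x$) yields the first inequality.

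For the lower tail I would run the same argument with $\lambda < 0$; writing $\mu = -\lambda > 0$, every $-2\lambda a_i = 2\mu a_i > 0$ and the series expansion of $-\log(1+u) + u$ for $u>0$ gives the sharper bound $-\log(1-2\lambda a_i) - 2\lambda a_i \leq (2\lambda a_i)^2/2$ with no denominator term, so $\psi(\lambda) \leq \lambda^2 \|\V a\|_2^2$ for all $\lambda < 0$. Chernoff then yields $\P(\V Z \leq -t) \leq \exp(-\mu t + \mu^2 \|\V a\|_2^2)$, and optimizing by $\mu_* = t/(2\|\V a\|_2^2)$ with $t = 2\|\V a\|_2 \sqrt{x}$ gives the second inequality $\P(\V Z \leq -2\|\V a\|_2\sqrt{x}) \leq e^{-x}$.

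The only subtle step is the upper-tail optimization: one must verify that the specific $\lambda_*$ chosen above both lies in the admissible range $(0,1/(2\|\V a\|_\infty))$ and makes the Chernoff exponent equal to exactly $-x$. That is the piece I expect to be the main obstacle in a clean write-up, since the algebra is not fully transparent; the trick is to recognize that the threshold $2\|\V a\|_2\sqrt{x}+2\|\V a\|_\infty x$ is designed as the Legendre transform of the bound $\lambda \mapsto \lambda^2 \|\V a\|_2^2/(1-2\lambda\|\V a\|_\infty)$ evaluated at $x$, so the substitution works by construction. Once that identification is in hand, both inequalities follow without further work.
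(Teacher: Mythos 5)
The paper does not prove this lemma at all: it is quoted verbatim from the cited reference (Laurent and Massart, 2000, where it appears as their Lemma 1) and used as a black box in the Appendix. Your Chernoff-method derivation is correct and is essentially the standard proof from that source: the cumulant bound $\psi(\lambda)\leq \lambda^2\|\V a\|_2^2/(1-2\lambda\|\V a\|_\infty)$ via $-\log(1-u)-u\leq u^2/(2(1-u))$, the exact collapse of the exponent to $-x$ at $\lambda_*=\sqrt{x}/(\|\V a\|_2+2\|\V a\|_\infty\sqrt{x})$ (which I verified algebraically, including admissibility $2\lambda_*\|\V a\|_\infty<1$), and the denominator-free quadratic bound for $\lambda<0$ giving the lower tail all go through as you describe.
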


\subsection*{Bound of Empirical Threshold}
Now we prove the main theorem in the paper. We first show the results of a fixed length scan.
\begin{lemma}\label{th-fixed-scan}
If the following conditions (A)-(C) hold,
\begin{itemize}
\item[(A)] $\max_{|I|=L_p} ||\V\lambda_I||_{\infty} \leq K_0$, where $K_0$ is a constant,
\item[(B)] $\frac{L_p}{\log p} \rightarrow \infty$ and $\frac{\log(L_p)}{\log(p)} \rightarrow 0$,
\item[(C)] $\{U_i\}_{i=1}^p$ is $M_p$-dependent and $\frac{\log(M_p)}{\log(p)} \rightarrow 0$,
\end{itemize}
then
\begin{equation*}
\frac{\max_{|I|=L_p} Q(I)}{\sqrt{2\log(p)}} \stackrel{p}{\rightarrow} 1.
\end{equation*}
\end{lemma}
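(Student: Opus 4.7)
\textbf{Proof proposal for Lemma \ref{th-fixed-scan}.}
The plan is to prove the upper bound $\limsup_p \max_{|I|=L_p} Q(I)/\sqrt{2\log p}\le 1$ in probability by a union bound combined with Lemma \ref{lemma:chisq}, and the matching lower bound by exhibiting $p^{1-o(1)}$ independent intervals and applying the sharp Gaussian approximation of Lemma \ref{lemma:bernstein}. The starting point is a diagonalization: since $\V U_I\sim N(\V 0,\V\Sigma_I)$ under $H_0$ and $\V\Sigma_I=\V V_I\diag(\V\lambda_I)\V V_I^T$, we may write
\begin{equation*}
Q(I)=\sum_{j=1}^{L_p} a_{I,j}(W_{I,j}^2-1),\qquad a_{I,j}=\frac{\lambda_{I,j}}{\sqrt{2\sum_k\lambda_{I,k}^2}},
\end{equation*}
with $W_{I,j}\stackrel{\mathrm{iid}}{\sim}N(0,1)$. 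Note $\|\V a_I\|_2=1/\sqrt{2}$, and by Cauchy--Schwarz together with $\sigma_i^2\ge c$, one has $\sum_j\lambda_{I,j}^2\ge (\sum_j\lambda_{I,j})^2/L_p=(\tr\V\Sigma_I)^2/L_p\ge c^2 L_p$; combined with (A) this gives $\|\V a_I\|_\infty\le K_0/\sqrt{2c^2 L_p}$.

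For the upper bound, Lemma \ref{lemma:chisq} applied to $Q(I)$ yields, for any $x>0$,
\begin{equation*}
\P\bigl(Q(I)\ge \sqrt{2x}+\sqrt{2}\,K_0 x/(c\sqrt{L_p})\bigr)\le e^{-x}.
\end{equation*}
Fix $\epsilon>0$, set $t=(1+\epsilon)\sqrt{2\log p}$, and solve for $x$: because $L_p/\log p\to\infty$ by (B), the linear correction $\sqrt 2 K_0 x/(c\sqrt{L_p})$ is $o(\sqrt{\log p})$ when $x\asymp\log p$, so one may choose $x\ge(1+\epsilon/2)^2\log p$ for $p$ large. Then $\P(Q(I)\ge t)\le p^{-(1+\epsilon/2)^2}$, and a union bound over the at most $p$ intervals of length $L_p$ gives $\P(\max_{|I|=L_p} Q(I)\ge t)\le p^{1-(1+\epsilon/2)^2}\to 0$.

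For the lower bound, extract $K_p=\lfloor p/(L_p+M_p)\rfloor$ pairwise disjoint intervals $J_1,\dots,J_{K_p}$ of length $L_p$ separated by gaps of at least $M_p$. By (C) and joint normality of $\V U$, the vectors $\V U_{J_1},\dots,\V U_{J_{K_p}}$ are independent, hence so are the statistics $Q(J_k)$. To get a sharp lower tail for a single $Q(J)$, apply Lemma \ref{lemma:bernstein} to $X_j=\lambda_{J,j}(W_{J,j}^2-1)$ with $n=L_p$, $B_n=2\sum_j\lambda_{J,j}^2$. The cumulant generating function $L_j(z)=-z\lambda_{J,j}-\tfrac12\log(1-2z\lambda_{J,j})$ is uniformly bounded on $|z|\le 1/(4K_0)$ thanks to (A); condition (3) holds because $B_n/n\ge 2c^2$ from the Cauchy--Schwarz bound above; and condition (2) is immediate. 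Setting $x=(1-\epsilon)\sqrt{2\log p}$, we have $x/\sqrt{n}\to 0$ by (B), so Lemma \ref{lemma:bernstein} gives
\begin{equation*}
\P\bigl(Q(J)>x\bigr)=\bigl(1-\Phi(x)\bigr)\cdot\bigl[1+o(1)\bigr]\cdot\bigl(1+l_1 w\bigr),
\end{equation*}
and Mills' ratio (Lemma \ref{lemma:mills}) yields $\P(Q(J)>x)\ge C\,p^{-(1-\epsilon)^2}/\sqrt{\log p}$ for some constant $C>0$ and all $p$ large, uniformly in $J$.

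Finally, by (B) and (C), $L_p+M_p=p^{o(1)}$, so $K_p\ge p^{1-o(1)}$. Independence of the $Q(J_k)$ then gives
\begin{equation*}
\P\Bigl(\max_{1\le k\le K_p}Q(J_k)\le(1-\epsilon)\sqrt{2\log p}\Bigr)\le\exp\bigl(-K_p\cdot C p^{-(1-\epsilon)^2}/\sqrt{\log p}\bigr)\to 0,
\end{equation*}
because the exponent is $-C p^{2\epsilon-\epsilon^2-o(1)}/\sqrt{\log p}\to -\infty$. Since $\max_{|I|=L_p}Q(I)\ge\max_k Q(J_k)$, this proves $\max Q(I)\ge(1-\epsilon)\sqrt{2\log p}$ with probability tending to $1$. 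Combining with the upper bound and letting $\epsilon\downarrow 0$ gives the claim. The main obstacle I anticipate is making the application of Lemma \ref{lemma:bernstein} uniform over intervals $J$ (so that the constants $C$, $w$, $l$ do not depend on $J$), which requires checking that the Petrov conditions hold with constants controlled only through $K_0$ and $c$ via (A) and the lower bound $\sigma_i^2\ge c$; the rest is bookkeeping with the tail bounds and the gap between $(1-\epsilon)^2$ and $1$.
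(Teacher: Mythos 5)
Your upper bound is the same as the paper's: diagonalize $\V\Sigma_I$, apply the Laurent--Massart inequality (Lemma \ref{lemma:chisq}) with the observation that $\|\V\lambda_I\|_\infty \log p = o(\|\V\lambda_I\|_2\sqrt{\log p})$ under (A) and (B), and take a union bound over the $\le p$ windows. Your lower bound, however, takes a genuinely different route. The paper keeps \emph{all} overlapping windows and lower-bounds $\P(\bigcup_k A_k)$ via the Chung--Erd\H{o}s inequality, splitting the pair sum $\sum_{i,j}\P(A_i\cap A_j)$ into diagonal terms, nearby pairs ($|i-j|\le M_p+L_p$, bounded crudely by $\P(A_i)$), and distant pairs (factorized by $M_p$-dependence), reducing everything to showing $\sum_k\P(A_k)/(M_p+L_p)\to\infty$. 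You instead thin the family to $K_p=\lfloor p/(L_p+M_p)\rfloor$ windows separated by gaps exceeding $M_p$, which are mutually independent by $M_p$-dependence (and joint normality), and use the elementary product bound $\P(\max_k Q(J_k)\le x)\le\exp\{-\sum_k\P(Q(J_k)>x)\}$. Both arguments need the same per-window tail lower bound from Petrov's refinement plus Mills' ratio, and both ultimately hinge on the same quantity $K_p\cdot\P(Q(J)>x)\asymp p^{2\epsilon-\epsilon^2-o(1)}\to\infty$. Your version is more elementary (no Chung--Erd\H{o}s) and the loss of a factor $L_p+M_p=p^{o(1)}$ in the number of windows is harmless; the paper's version avoids having to invoke mutual independence of many blocks. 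One small imprecision on your side: with $x\asymp\sqrt{\log p}$ and $n=L_p$, the Cram\'er correction $\exp\{x^3\lambda_n(x/\sqrt n)/\sqrt n\}$ in Lemma \ref{lemma:bernstein} is $\exp\{O((\log p)^{3/2}/\sqrt{L_p})\}=p^{\pm o(1)}$, not $1+o(1)$ as you assert; this does not break the argument, since a $p^{-o(1)}$ factor is absorbed by the exponent $2\epsilon-\epsilon^2-o(1)$ exactly as in the paper's own bookkeeping, but your displayed equality for $\P(Q(J)>x)$ should carry that factor explicitly.
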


\begin{proof}
For any $I$ that satisfies that $|I|=L_p$, assume $\V \Omega_I \V\Sigma_I \V \Omega_I^T = \diag(\V\lambda_I) $ is the SVD of $\V\Sigma_I$ where $\V\Omega_I$ is an orthogonal matrix. Let $\tilde{\V U}_I = \V\Omega_I\V U_I$, then $\tilde{\V U}_I \sim N\big(0,\diag(\V\lambda_I)\big)$ and $\tilde{\V U}_I^T\tilde{\V U}_I = \V U_I^T\V U_I$. Hence, for $\epsilon>0$, $ \P(Q(I)>(1+\epsilon)\sqrt{2\log(p)}) = \P\big(\sum_{i \in I}\lambda_{I,i}(\frac{\tilde{U}_i^2}{\lambda_{I,i}}-1)>2(1+\epsilon)||\V\lambda_I||_2\sqrt{\log(p)}\big).$ Note that $||\V\lambda_I||_2 \geq ||\V\lambda_I||_1^2/\sqrt{|I|} \geq c\sqrt{|I|}$, by conditions (A) and (B), $ (1+\frac{\epsilon}{2})^2||\V\lambda_I||_{\infty}\log(p) = O\big(\log(p)\big) = o(||\V\lambda_I||_2\sqrt{\log p}).$ Then by Lemma \ref{lemma:chisq}, for $p$ sufficiently large,
\begin{equation*}
\P\Big(\sum_{i \in I} \lambda_{I,i}(\frac{\tilde{U}_i^2}{\lambda_{I,i}}-1) > 2(1+\epsilon)||\V\lambda_I||_2\sqrt{\log(p)}\Big) \leq \exp\{-(1+\frac{\epsilon}{2})^2\log(p)\}.
\end{equation*}
Hence, using Boole's inequality,
\begin{equation}\label{th1:p1}
\P\Big(\max_{|I|=L_p} Q(I) > (1+\epsilon)\sqrt{2\log(p)}\Big) \leq \sum_{|I|=L_p}\P\Big(Q(I) > (1+\epsilon)\sqrt{2\log(p)}\Big)  = o(1).
\end{equation}
On the other hand, $ \P\Big(\max_{|I|=L_p} Q(I) < (1-\epsilon)\sqrt{2\log(p)}\Big) = 1 - \P\big(\bigcup_{|I|=L_p}\{Q(I) \geq (1-\epsilon)\sqrt{2\log(p)}\}\big)$. Let $I_k=\{k,k+1\cdots,k+L_p-1\}$ and $A_k = \Big\{\frac{\sum_{i \in I_k} U_i^2 - ||\V\lambda_{I_k}||_1}{\sqrt{2}||\V\lambda_{I_k}||_2} \geq (1-\epsilon)\sqrt{2\log(p)}\Big\}$, by Chung-Erd\"{o}s inequality, $ \P\big(\bigcup_{|I|=L_p}\{Q(I) \geq (1-\epsilon)\sqrt{2\log(p)}\}\big) \geq \frac{\{\sum_{k=1}^{p-L_p+1}\P(A_k)\}^2}{\sum_{i=1}^{p-L_p+1}\sum_{j=1}^{p-L_p+1}\P(A_i \cap A_j)}$. Note that
\begin{eqnarray*}
\sum_{i=1}^{p-L_p+1}\sum_{j=1}^{p-L_p+1}\P(A_i \cap A_j)
&=& \sum_{i=1}^{p-L_p+1} \P(A_i) + \sum_{|i-j|\leq M_p+L_p} \P(A_i \cap A_j) + \sum_{|i-j| > M_p+L_p} \P(A_i)\P(A_j)
\\&\leq& (M_p+L_p+1) \sum_{i=1}^{p-L_p+1} \P(A_i) + \sum_{|i-j|>M_p+L_p} \P(A_i)\P(A_j),
\end{eqnarray*}
then we get $ \frac{\{\sum_{k=1}^{p-L_p+1}\P(A_k)\}^2}{\sum_{i=1}^{p-L_p+1}\sum_{j=1}^{p-L_p+1}\P(A_i \cap A_j)} \geq 1 - \frac{M_p+L_p+1}{\sum_{k=1}^{p-L_p+1}\P(A_k)}$. By conditions (A) and (B), using Lemma \ref{lemma:bernstein}, there exist constants $c_1,c_2>0$ such that
\begin{equation*}
\sum_{k=1}^{p-L_p+1} \P(A_k) \geq \P\big[1-\Phi\{(1-\epsilon)\sqrt{2\log(p)}\}\big]\exp\big\{\frac{2^{\frac{3}{2}}(1-\epsilon)^3(\log(p))^{\frac{3}{2}}}{\sqrt{L}}c_1\big\}c_2.
\end{equation*}
Then, using Lemma \ref{lemma:mills}, we have $\sum_{k=1}^{p-L_p+1} \P(A_k) \geq \frac{c_2}{4(1-\epsilon)}\exp\big\{\log(p)(2\epsilon-\epsilon^2+2^{\frac{3}{2}}c_1(1-\epsilon)^3\sqrt{\frac{\log(p)}{L_p}})-\frac{1}{2}\log(\log(p))\big\}.$ Hence by conditions (B) and (C), $ \sum_{k=1}^{p-L_p+1}\P(A_k) / (M_p+L_p+1) \rightarrow \infty.$
This indicates that
\begin{equation}\label{th1:p2}
\P\big(\max_{|I|=L_p} Q(I) < (1-\epsilon)\sqrt{2\log(p)}\big) \rightarrow 0.
\end{equation}
Combining (\ref{th1:p1}) and (\ref{th1:p2}), we get $ \P\Big(|\frac{\max_{|I|=L_p} Q(I)}{\sqrt{2\log(p)}} - 1| > \epsilon \Big) \rightarrow 0.$ As  $\epsilon$ is arbitrary, we complete the proof. \QEDB
\end{proof}
We then prove the result of multi-length scan, which is the Theorem 1 in the paper.
\begin{proof}
For any $\epsilon > 0$,
\begin{eqnarray*}
&&\P\Big(\big|\frac{\max_{L_{\min} \leq |I| \leq L_{\max}} Q(I)}{\sqrt{2\log(p)}} - 1 \big| \geq \epsilon\Big)
\\&=& \P\Big(\max_{L_{\min} \leq |I| \leq L_{\max}} Q(I) \geq (1+\epsilon)\sqrt{2\log(p)}\Big) + \P\Big(\max_{L_{\min} \leq |I| \leq L_{\max}} Q(I) \leq (1-\epsilon)\sqrt{2\log(p)}\Big)
\\&\triangleq& A_1(\epsilon) + A_2(\epsilon).
\end{eqnarray*}
Note that $\max_{L_{\min} \leq |I| \leq L_{\max}} Q(I) \geq \max_{|I|=L_{\max}} Q(I)$, by Lemma \ref{th-fixed-scan},
\begin{equation}\label{th2:p1}
A_2(\epsilon) \leq \P\Big(\max_{|I|=L_{\max}} Q(I) \leq (1-\epsilon)\sqrt{2\log(p)}\Big) = o(1).
\end{equation}
For $A_1(\epsilon)$, by Boole's inequality, $ A_1(\epsilon) \leq \sum_{L_{\min} \leq |I| \leq L_{\max}} \P\Big(Q(I) \geq (1+\epsilon)\sqrt{2\log(p)}\Big).$ For any $I$ satisfies that $L_{\min} \leq |I| \leq L_{\max}$, by condition (B), $ \frac{2(1+\frac{\epsilon}{2})^2\log(p)||\V\lambda_I||_{\infty}}{||\V\lambda_I||_2} \leq \frac{2K_0(1+\frac{\epsilon}{2})^2\log(p)}{\sqrt{L_{\min}}} = o(\sqrt{\log(p)}).$ Hence, by Lemma \ref{lemma:chisq},
\begin{equation}\label{th2:p2}
A_1(\epsilon) \leq \sum_{L_{\min} \leq |I| \leq L_{\max}} \exp\{-(1+\frac{\epsilon}{2})^2\log(p)\}
\leq \exp\{-(1+\frac{\epsilon}{2})^2\log(p) + \log(p) + \log(L_{\max})\}
= o(1).
\end{equation}
Using (\ref{th2:p1}) and (\ref{th2:p2}), we have $ \P\Big(\big|\frac{\max_{L_{\min} \leq |I| \leq L_{\max}} Q(I)}{\sqrt{2\log(p)}} - 1 \big| \geq \epsilon\Big) = o(1).$ By the arbitrary of $\epsilon$, we complete the proof. \QEDB
\end{proof}
Next we prove Proposition \ref{prop:thres-upper}, which gives an upper bound of $h(p,L_{\min},L_{\max},\alpha)$.
\begin{proof}
Let $\tilde{h}(p,\alpha) = \sqrt{2\big[\log\{p(L_{\max}-L_{\min})\} - \log(\alpha)\big]} + \frac{\sqrt{2}\big[\log\{p(L_{\max}-L_{\min})\} - \log(\alpha)\big]}{\{L_{\min}\log(p)\}^{\frac{1}{4}}}$. For any $I$ that satisfies  $L_{\min} \leq |I| \leq L_{\max}$, we have
\begin{equation*}
\frac{2||\lambda_I||_{\infty}\big[\log\{p(L_{\max}-L_{\min})\} - \log(\alpha)\big]}{\sqrt{2}||\V\lambda_I||_2}
= O\Big(\frac{\sqrt{2}\big[\log\{p(L_{\max}-L_{\min})\} - \log(\alpha)\big]}{\sqrt{|I|}}\Big).
\end{equation*}
Note that $\frac{\log(p)}{|I|} \leq \frac{\log(p)}{L_{\min}} = o(1)$, then for sufficiently large $p$,
\begin{equation*}
\frac{2||\V\lambda_I||_{\infty}\big[\log\{p(L_{\max}-L_{\min})\} - \log(\alpha)\big]}{\sqrt{2}||\V\lambda_I||_2} \leq \frac{\sqrt{2}\big[\log\{p(L_{\max}-L_{\min})\} - \log(\alpha)\big]}{\big(L_{\min}\log(p)\big)^{\frac{1}{4}}}.
\end{equation*}
Hence, by Lemma \ref{lemma:chisq},
$\P\big(Q_{\max} > \tilde{h}(p,\alpha) \big) \leq \sum_{L_{\min} \leq |I| \leq L_{\max}} \exp\big(-\log\{p(L_{\max}-L_{\min})\} - \log(\alpha)\big) \leq \alpha.$ Thus $h(p,L_{\min},L_{\max},\alpha) \leq \tilde{h}(p,\alpha)$ and we complete the proof.
\QEDB
\end{proof}

\subsection*{Consistency of Signal Region Detection}
In this section, we show the results of power analysis. We first prove Theorem \ref{th-power-weak}, which shows that the proposed Q-SCAN procedure could consistently select a signal region that overlaps with the true signal region.
\begin{proof}
By the definition of $Q(I)$,
$\P\big(Q(I^*) \leq h(p,L_{\min},L_{\max},\alpha)\big)
\leq \P\Big(\big|\sum_{i \in I^*} U_i^2 - ||\V\lambda_{I^*}||_1 - ||\V\mu_{I^*}||_2^2\big| \geq ||\V\mu_{I^*}||_2^2 - \sqrt{2}||\V\lambda_{I^*}||_2h(p,L_{\min},L_{\max},\alpha)\Big).$ Because $\Var(\sum_{i \in I^*} U_i^2)  \leq 4||\V\lambda_{I^*}||_{\infty}||\V\mu_{I^*}||_2^2 + 2 ||\V\lambda_{I^*}||_2^2$, by using condition (D) and Theorem \ref{th-multi-scan}, for $p$ sufficiently large, $ \frac{\sqrt{2}||\V\lambda_{I^*}||_2h(p,L_{\min},L_{\max},\alpha)}{||\V\mu_{I^*}||_2^2} \leq \frac{h(p,L_{\min},L_{\max},\alpha)}{\sqrt{2}(1+\epsilon)\sqrt{\log(p)}} \leq 1 - \frac{\epsilon_0}{2}.$ Hence, by Chebyshev's inequality, $ \P\big(Q(I^*) \leq h(p,L_{\min},L_{\max},\alpha) \big) \leq \frac{16||\V\lambda_{I^*}||_{\infty}||\V\mu_{I^*}||_2^2 + 8||\V\lambda_{I^*}||_2^2}{\epsilon_0^2||\V\mu_{I^*}||_2^4} = o(1).$ \QEDB
\end{proof}
Next we prove Theorem \ref{th-power-strong}, which shows that the proposed Q-SCAN could consistently detect the existence and location of the true signal region.
\begin{proof}
By the definition of $J(\hat{I},I^*)$,
\begin{eqnarray*}
\P\big(J(\hat{I},I^*) \geq 1 - \eta_p\big)
&\leq& \P\big(\sup_{I \cap I^* \neq \emptyset} Q(I) > (1+\frac{\epsilon_0}{4})\sqrt{2\log(p)}\big) + \P\big(Q(I^*) \leq (1+\frac{\epsilon_0}{4})\sqrt{2\log(p)}\big)
\\&& + \P\big(J(\hat{I},I^*) < 1 - \eta_p, ~\hat{I} \cap I^* \neq \emptyset\big).
\end{eqnarray*}
For the first part, by Theorem \ref{th-multi-scan}, $ \P\big(\sup_{I \cap I^* \neq \emptyset} Q(I) > (1+\frac{\epsilon}{4})\sqrt{2\log(p)}\big) = o(1).$ For the second part, by the same approach discussed in Theorem \ref{th-power-weak}, we have $ \P\big(Q(I^*) \leq (1+\frac{\epsilon_0}{4})\sqrt{2\log(p)}\big) = o(1).$ We next consider the third part. Note that $J(\hat{I},I^*) < 1- \eta_p$ implies that $|\hat{I}\backslash I^*| > \frac{\eta_p}{2}|I^*|$ or $|I^* \backslash \hat{I}| > \frac{\eta_p}{2}|I^*|$,
\begin{equation}\label{th5:p3}
\P\big(J(\hat{I},I^*) < 1 - \eta_p, ~\hat{I} \cap I^* \neq \emptyset\big)
\leq \sum_{\substack{I \cap I^* \neq \emptyset\\|I\backslash I^*|>\frac{\eta_p}{2}|I^*|}} \P\big(Q(I) - Q(I^*) > 0\big) + \sum_{\substack{I \cap I^* \neq \emptyset \\ |I^*\backslash I|>\frac{\eta_p}{2}|I^*|}}\P\big(Q(I) - Q(I^*) > 0\big).
\end{equation}
Let $\tilde{U}_i = U_i - \mu_i$, for any $I \cap I^* \neq \emptyset$, we have
\begin{eqnarray}\label{th5:Q-p}
Q(I) - Q(I^*)
&=& \frac{\sum_{i \in I \backslash I^*} \tilde{U}_i^2 - ||\V\lambda_{I \backslash I^*}||_1}{\sqrt{2}||\V\lambda_I||_2} + (1-\frac{||\V\lambda_I||_2}{||\V\lambda_{I^*}||_2}) \frac{\sum_{i \in I \cap I^*}\tilde{U}_i^2 - ||\V\lambda_{I \cap I^*}||_1}{\sqrt{2}||\V\lambda_I||_2} \nonumber
\\&& + \frac{2\sum_{i \in I \cap I^*} \mu_i\tilde{U}_i}{\sqrt{2}||\V\lambda_I||_2}(1-\frac{||\V\lambda_I||_2}{||\V\lambda_{I^*}||_2}) - \frac{2\sum_{i \in I^* \backslash I} \mu_i\tilde{U}_i}{\sqrt{2}||\V\lambda_{I^*}||_2} - \frac{\sum_{i \in I^* \backslash I}\tilde{U}_i^2 - ||\V\lambda_{I^* \backslash I}||_1}{\sqrt{2}||\V\lambda_{I^*}||_2} \nonumber
\\&& + \frac{||\V\mu_{I \cap I^*}||_2^2}{\sqrt{2}||\V\lambda_I||_2} - \frac{||\V\mu_{I^*}||^2_2}{\sqrt{2}||\V\lambda_{I^*}||_2}.
\end{eqnarray}
Assume $\delta = \inf_{I \subsetneq I^*} \frac{\log(||\V\mu_{I^*}||_2^2) - \log(||\V\mu_I||_2^2)}{\log(||\V\lambda_{I^*}||_2) - \log(||\V\lambda_I||_2)} - 1$, then by condition (E), there exists $\delta > 0$ and for any $I$, $ \frac{||\V\mu_{I^*}||_2^2}{||\V\lambda_{I^*}||_2} - \frac{||\V\mu_{I \cap I^*}||_2^2}{||\V\lambda_{I}||_2} \geq \frac{||\V\mu_{I^*}||_2^2}{||\V\lambda_{I^*}||_2}\big(1 - \frac{||\V\lambda_{I \cap I^*}||_2^{1+\delta}}{||\V\lambda_I||_2||\V\lambda_{I^*}||_2^{\delta}}\big).$ By (\ref{th5:Q-p}), it follows that $ \P\big(Q(I) - Q(I^*) > 0\big) \leq \sum_{i=1}^5 \P\big(A_i(I)\big)$,
where $A_1(I) = \Big\{\frac{\sum_{i \in I \backslash I^*} \tilde{U}_i^2 - ||\V\lambda_{I \backslash I^*}||_1}{\sqrt{2}||\V\lambda_I||_2} > \frac{||\V\mu_{I^*}||_2^2}{5\sqrt{2}||\V\lambda_{I^*}||_2}\big(1-\frac{||\V\lambda_{I \cap I^*}||_2^{1+\delta}}{||\V\lambda_I||_2||\V\lambda_{I^*}||_2^{\delta}}\big)\Big\}$,
$A_2(I) = \Big\{(1-\frac{||\V\lambda_I||_2}{||\V\lambda_{I^*}||_2}) \frac{\sum_{i \in I \cap I^*}\tilde{U}_i^2 - ||\V\lambda_{I \cap I^*}||_1}{\sqrt{2}||\V\lambda_I||_2} >  \frac{||\V\mu_{I^*}||_2^2}{5\sqrt{2}||\V\lambda_{I^*}||_2}\big(1-\frac{||\V\lambda_{I \cap I^*}||_2^{1+\delta}}{||\V\lambda_I||_2||\V\lambda_{I^*}||_2^{\delta}}\big)\Big\}$,
$A_3(I) = \Big\{- \frac{\sum_{i \in I^* \backslash I}\tilde{U}_i^2 - ||\V\lambda_{I^* \backslash I}||_1}{\sqrt{2}||\V\lambda_{I^*}||_2} > \frac{||\V\mu_{I^*}||_2^2}{5\sqrt{2}||\V\lambda_{I^*}||_2}\big(1-\frac{||\V\lambda_{I \cap I^*}||_2^{1+\delta}}{||\V\lambda_I||_2||\V\lambda_{I^*}||_2^{\delta}}\big)\Big\}$,
$A_4(I) = \Big\{ \frac{2\sum_{i \in I \cap I^*} \mu_i\tilde{U}_i}{\sqrt{2}||\V\lambda_I||_2}(1-\frac{||\V\lambda_I||_2}{||\V\lambda_{I^*}||_2})  > \frac{||\V\mu_{I^*}||_2^2}{5\sqrt{2}||\V\lambda_{I^*}||_2}\big(1-\frac{||\V\lambda_{I \cap I^*}||_2^{1+\delta}}{||\V\lambda_I||_2||\V\lambda_{I^*}||_2^{\delta}}\big)\Big\}$,
and $A_5(I) = \Big\{- \frac{2\sum_{i \in I^* \backslash I} \mu_i\tilde{U}_i}{\sqrt{2}||\V\lambda_{I^*}||_2} > \frac{||\V\mu_{I^*}||_2^2}{5\sqrt{2}||\V\lambda_{I^*}||_2}\big(1-\frac{||\V\lambda_{I \cap I^*}||_2^{1+\delta}}{||\V\lambda_I||_2||\V\lambda_{I^*}||_2^{\delta}}\big)\Big\}$.

We first consider $A_1(I)$. When $|I \backslash I^*| \geq \frac{\eta_p|I^*|}{2}$, it is obvious that $\frac{|I \backslash I^*|}{|I|} = \frac{|I \backslash I^*|}{|I \backslash I^*| + |I \cap I^*|} \geq \frac{\eta_p}{3} $, then we have $ \frac{||\V\lambda_I||_2^2 - ||\V\lambda_{I \cap I^*}||_2^2}{||\V\lambda_I||_2^2} \geq \frac{c^2\eta_p}{3K_0^2}$
and $\frac{||\V\lambda_I||_2^2 - ||\V\lambda_{I \cap I^*}||_2^2}{||\V\lambda_{I^*}||_2^2} \geq \frac{c^2\eta_p}{2K_0^2}$. It follows that $ \frac{||\V\lambda_I||_2^2 - ||\V\lambda_{I \cap I^*}||_2^2}{||\V\lambda_I||_2(||\V\lambda_{I^*}||_2+||\V\lambda_I||_2)} \geq \frac{2c^2\eta_p}{9K_0^2}$, and hence
$ \frac{||\V\mu_{I^*}||_2^2(||\V\lambda_I||_2||\V\lambda_{I^*}||_2^{\delta}-||\V\lambda_{I \cap I^*}||_2^{1+\delta})}{||\V\lambda_{I^*}||_2^{1+\delta}} \geq \frac{2c^2\eta_p||\V\mu_{I^*}||_2^2}{9K_0^2}.$
On the other hand,
$\frac{||\V\lambda_I||_2^2 - ||\V\lambda_{I \cap I^*}||_2^2}{(||\V\lambda_I||_2 + ||\V\lambda_{I^*}||_2)||\V\lambda_{I^*}||_2} \geq \frac{c\eta_p||\V\lambda_{I^* \backslash I}||_2}{4K_0||\V\lambda_{I^*}||_2}$, and thus
$\frac{||\V\mu_{I^*}||_2^2(||\V\lambda_I||_2||\V\lambda_{I^*}||_2^{\delta}-||\V\lambda_{I \cap I^*}||_2^{1+\delta})}{||\V\lambda_{I^*}||_2^{1+\delta}}
\geq \frac{||\V\mu_{I^*}||_2^2c\eta_p||\V\lambda_{I^* \backslash I}}{4K_0||\V\lambda_{I^*}||_2}$. Note that $6||\V\lambda_{I\backslash I^*}||_{\infty}\log(L_{\max})= o(\eta_p||\V\mu_{I^*}||_2^2)$ and $ 2||\V\lambda_{I\backslash I^*}||_2\sqrt{3\log(L_{\max})} = o\big(||\V\lambda_{I^* \backslash I}||_2\eta_p \sqrt{\log(p)}\big)$, by Lemma \ref{lemma:chisq}, we have $\P\big(A_1(I)\big) \leq 1/L_{\max}^3$ and thus
\begin{equation}\label{th5:p3-A1-1}
\sum_{\substack{I \cap I^* \neq \emptyset \\ |I \backslash I^*| > \frac{\eta_p}{2}|I^*|}} \P\big(A_1(I)\big) \leq \frac{2L_{\max}^2}{L_{\max}^3} = \frac{2}{L_{\max}} = o(1).
\end{equation}
When $|I^* \backslash I| \geq \frac{\eta_p}{2}|I^*|$, we have $ \frac{||\V\lambda_{I \cap I^*}||_2^2}{||\V\lambda_{I^*}||_2^2}  \leq 1 - \frac{c^2\eta_p}{2K_0^2}$, and then
$1 - \Big(\frac{||\V\lambda_{I\cap I^*}||_2}{||\V\lambda_{I^*}||_2}\Big)^{\delta} \geq \frac{\delta c^2\eta_p}{4K_0^2}.$
Note that $\frac{||\V\mu_{I^*}||_2^2(||\V\lambda_I||_2||\V\lambda_{I^*}||_2^{\delta} - ||\V\lambda_{I\cap I^*}||_2^{1+\delta})}{||\V\lambda_{I^*}||_2^{1+\delta}}
\geq \frac{\delta c^2 \eta_p\sqrt{\log(p)}||\V\lambda_I||_2}{K_0^2}$, and $||\V\lambda_I||_2\sqrt{\log p} \geq c\sqrt{|I|\log(p)} \geq c\sqrt{L_{\min}\log(p)}$, we have $ 2||\V\lambda_{I \backslash I^*}||_2\sqrt{3\log(L_{\max})} + 6 ||\V\lambda_{I \backslash I^*}||_{\infty}\log(L_{\max}) = o(\eta_p \sqrt{\log(p)}||\V\lambda_I||_2).$ Then by Lemma \ref{lemma:chisq}, we have $\P\big(A_1(I)\big) \leq 1/L_{\max}^3$ and thus
\begin{equation}\label{th5:p3-A1-2}
\sum_{\substack{I^* \cap I \neq \emptyset \\ |I \backslash I^*| > \frac{\eta_p}{2}|I^*|}} \P\big(A_1(I)\big) \leq \frac{2L_{\max}^2}{L_{\max}^3} = \frac{2}{L_{\max}} = o(1).
\end{equation}

By the same approach we used for $A_1(I)$, for $A_2(I)$ and $A_3(I)$, we could get
\begin{equation}\label{th5:p3-A2-A3}
\sum_{I \cap I^* \neq \emptyset} \P\big(A_2(I)\big)  = o(1), \sum_{\substack{I\cap I^* \neq \emptyset \\ |I^* \backslash I| \geq \frac{\eta_p}{2}|I^*|}} \P\big(A_3(I)\big) = o(1).
\end{equation}

We now consider $A_4(I)$. Let $ \kappa = \frac{\frac{||\V\mu_{I^*}||_2^2(||\V\lambda_I||_2||\V\lambda_{I^*}||_2^{\delta} - ||\V\lambda_{I \cap I^*}||_2^{1+\delta})}{||\V\lambda_I||_2||\V\lambda_{I^*}||_2^{1+\delta}}}{\frac{2\sqrt{\V\mu_{I \cap I^*}^T\V\Sigma_{I \cap I^*}\V\mu_{I \cap I^*}}\big| ||\V\lambda_{I^*}||_2 - ||\V\lambda_I||_2 \big|}{||\V\lambda_I||_2||\V\lambda_{I^*}||_2}}$. By condition (D) and (E), we have $\kappa \geq \frac{\big(\log(p)\big)^{\frac{1}{4}}||\V\lambda_{I^*}||_2^{1-\frac{\delta}{2}}(||\V\lambda_I||_2||\V\lambda_{I^*}||_2^{\delta} - ||\V\lambda_{I \cap I^*}||_2^{1+\delta})}{\sqrt{K_0}\big|||\lambda_{I^*}||_2 - ||\V\lambda_I||_2\big|\times ||\V\lambda_{I\cap I^*}||_2^{\frac{1+\delta}{2}}} $. When $||\V\lambda_I||_2 \geq ||\V\lambda_{I^*}||_2$, $\kappa \geq \frac{\sqrt{\log(p)}}{\sqrt{K_0}}$. Note that $ \Var\Big\{\frac{\sqrt{2}\sum_{i \in I \cap I^*}\mu_i\tilde{U}_i}{||\V\lambda_I||_2}\big(1-\frac{||\V\lambda_I||_2}{||\V\lambda_{I^*}||_2}\big)\Big\}
= \frac{2\V\mu_{I \cap I^*}\V\Sigma_{I \cap I^*}\V\mu_{I \cap I^*}(||\V\lambda_{I^*}||_2 - ||\V\lambda_I||_2)^2}{||\V\lambda_I||_2^2||\V\lambda_{I^*}||_2^2} $,
by Lemma \ref{lemma:mills}, $\P\big(A_4(I)\big)  \leq \frac{1}{L_{\max}^2\log(p)}$. When $||\V\lambda_{I^*}||_2 \geq ||\V\lambda_{I}||_2$, $\kappa \geq \frac{\delta\big(\log(p)\big)^{\frac{1}{4}}||\V\lambda_I||_2^{\frac{1}{2}-\frac{\delta}{2}}||\V\lambda_{I^*}||_2^{\frac{\delta}{2}}}{\sqrt{K_0}}$. Hence, by Lemma \ref{lemma:mills}, $\P\big(A_4(I)\big) \leq \frac{1}{|I^*|^2\log(p)}$ . Therefore, note that $||\V\lambda_{I^*}||_2 \geq ||\V\lambda_I||_2$ implies that $|I| \leq K_0^2|I^*|/c^2$,
\begin{equation}\label{th5:p3-A4}
\sum_{I \cap I^* \neq \emptyset} \P\big(A_4(I)\big)
= \sum_{\substack{I \cap I^* \neq \emptyset \\ ||\V\lambda_I||_2 \geq ||\V\lambda_{I^*}||_2}} \P\big(A_4(I)\big) + \sum_{\substack{I \cap I^* \neq \emptyset \\ ||\V\lambda_I||_2 < ||\V\lambda_{I^*}||_2}} \P\big(A_4(I)\big)
 = o(1).
\end{equation}
Finally, we consider $A_5(I)$, by a similar approach to that used for $A_4(I)$, we have
\begin{equation}\label{th5:p3-A5-2}
\sum_{\substack{I \cap I^* \neq \emptyset \\ |I \backslash I^*| > \frac{\eta_p}{2}|I^*|}} \P\big(A_5(I)\big) = o(1).
\end{equation}
Therefore, by (\ref{th5:p3}) and (\ref{th5:p3-A1-1})-(\ref{th5:p3-A5-2}), we have $ \P\big(J(\hat{I},I^*) < 1 - \eta_p, ~\hat{I} \cap I^* \neq \emptyset\big) = o(1).$
\QEDB
\end{proof}

\pagebreak
\newpage
\clearpage
\thispagestyle{empty}

\begin{table}[H]
\centering
\caption{Simulation Studies of Family-Wise Error Rates. The family-wise error rate of the Q-SCAN procedure is estimated with $10000$ simulated data sets. In each data set, we considered three sample sizes $n=2,500$, $n=5,000$ or $n=10,000$, and the corresponding numbers of variants in the sequence are 189,597, 242,285 and 302,737. The minimum and maximum searching window lengths are set to be $L_{\min} = 40$ and $L_{\max} = 200$, respectively. Q-SCAN refers to the scan procedures using the scan statistics  $Q(I) = \sum_{i \in I} U_i^2  - \E(\sum_{i \in I} U_i^2)/var(\sum_{i \in I} U_i^2)$ for region $I$, where $U_i$ is the score statistic of $i$th variant. M-SCAN refers to the scan procedures using the scan statistics $M(I) = \big(\sum_{i \in I} U_i\big)^2/ var\big(\sum_{i \in I} U_i\big)$ for region $I$. The $95\%$ confidence interval of $10,000$ Bernoulli trials with probability $0.05$ and $0.01$ are $[0.0457,0.0543]$ and $[0.0080,0.0120]$.}
\label{tab:fwer}
\medskip
\begin{tabular}{cccccccccc}
\toprule
   &     & \multicolumn{3}{c}{Continuous Phenotypes} & \multicolumn{3}{c}{Dichotomous Phenotypes} \\
        \cmidrule(r){3-5} \cmidrule(r){6-8}
Total Sample Size  & Size       & $n=2500$ & $n=5000$  & $n=10000$ & $n=2500$   & $n=5000$  & $n=10000$   \\
\midrule
Q-SCAN & 0.05 & 0.0485 & 0.0475  & 0.0514 & 0.0414 &  0.0481  &  0.0488  \\
       & 0.01 & 0.0098 & 0.0100  & 0.0101 & 0.0077 &  0.0104  &  0.0092 \\
M-SCAN & 0.05 & 0.0453 & 0.0525  & 0.0494 & 0.0485 &  0.0501  &  0.0492  \\
       & 0.01 & 0.0097 & 0.0123  & 0.0083 & 0.0094 &  0.0085  &  0.0100  \\
\bottomrule
\end{tabular}
\end{table}

\pagebreak
\newpage
\thispagestyle{empty}

\begin{figure}[H]
\caption{Power and accuracy of estimated signal region comparisons using Q-SCAN and M-SCAN for the setting with the sparsity index $\xi=2/3$. We evaluated power via the signal region detection rate and the Jaccard index defined in the simulation section. Both criteria were calculated at the family-wise error rate at 0.05 level. The number of variants in signal region $p_0$ was randomly selected between 50 and 80.  and $s = p_0^\xi$ causal variants were selected randomly within each signal region. Each  causal variant has an effect size that is set as a decreasing function of MAF, $\beta=c\log_{10}$(MAF) and $c=0.185$.  From left to right, the plots consider settings in which the coefficients for the causal rare variants are 100\% positive (0\% negative), 80\% positive (20\% negative), and 50\% positive (50\% negative). We repeated the simulation for 1000 times. Q-SCAN and M-SCAN refer to the scan procedures using the scan statistics $\sum_{i \in I} U_i^2  - \E(\sum_{i \in I} U_i^2)/var(\sum_{i \in I} U_i^2)$ and $\big(\sum_{i \in I} U_i\big)^2/ var\big(\sum_{i \in I} U_i\big)$, respectively. In both two scan procedures, the range of the numbers of variants in searching windows were between 40 and 200.}
\centering
\label{fig:Q_SCAN_2_3}
\includegraphics[width=1\textwidth]{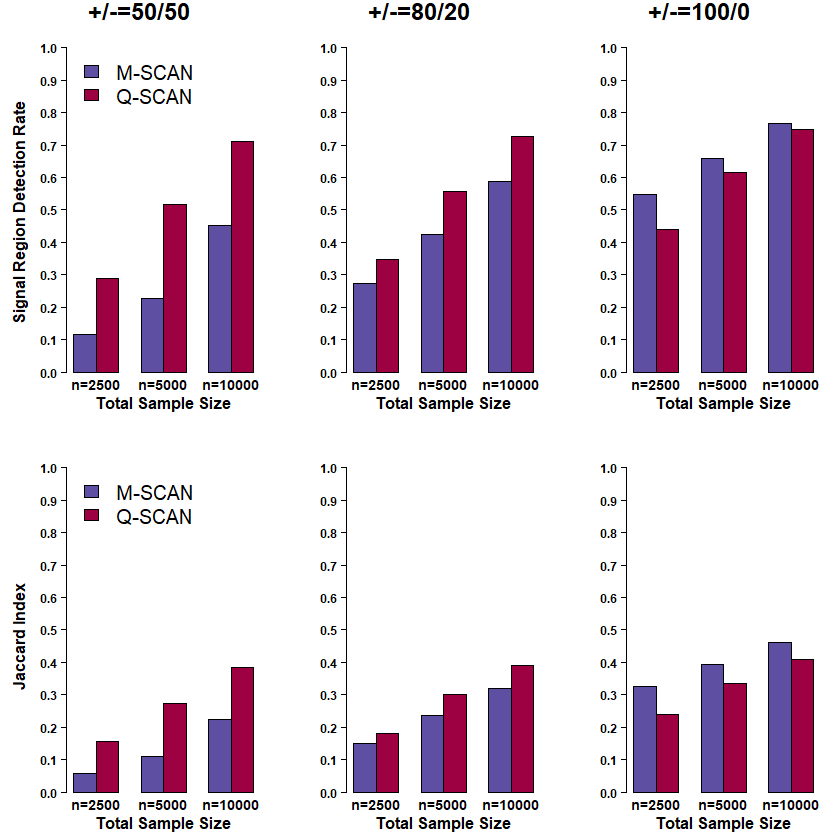}
\end{figure}

\pagebreak
\newpage
\thispagestyle{empty}

\begin{figure}[H]
\caption{Power and accuracy of estimated signal region comparisons using Q-SCAN and M-SCAN assuming the sparsity index $\xi=1/2$ . We evaluated power via the signal region detection rate and the Jaccard index defined in the simulation section. Both criteria were calculated at the family-wise error rate at 0.05 level. The number of variants in signal region $p_0$ was randomly selected between 50 and 80. The $s = p_0^\xi$ causal variants were selected randomly within each signal region. Each  causal variant has an effect size that is set as a decreasing function of MAF, $\beta=c\log_{10}$(MAF) and $c=0.30$.  From left to right, the plots consider settings in which the coefficients for the causal rare variants are 100\% positive (0\% negative), 80\% positive (20\% negative), and 50\% positive (50\% negative). We repeated the simulation for 1000 times. Q-SCAN and M-SCAN refer to the scan procedures using the scan statistics $\sum_{i \in I} U_i^2  - \E(\sum_{i \in I} U_i^2)/var(\sum_{i \in I} U_i^2)$ and $\big(\sum_{i \in I} U_i\big)^2/ var\big(\sum_{i \in I} U_i\big)$, respectively. In both two scan procedures, the range of variants number in searching windows was between 40 and 200.}
\centering
\label{fig:Q_SCAN_1_2}
\includegraphics[width=1\textwidth]{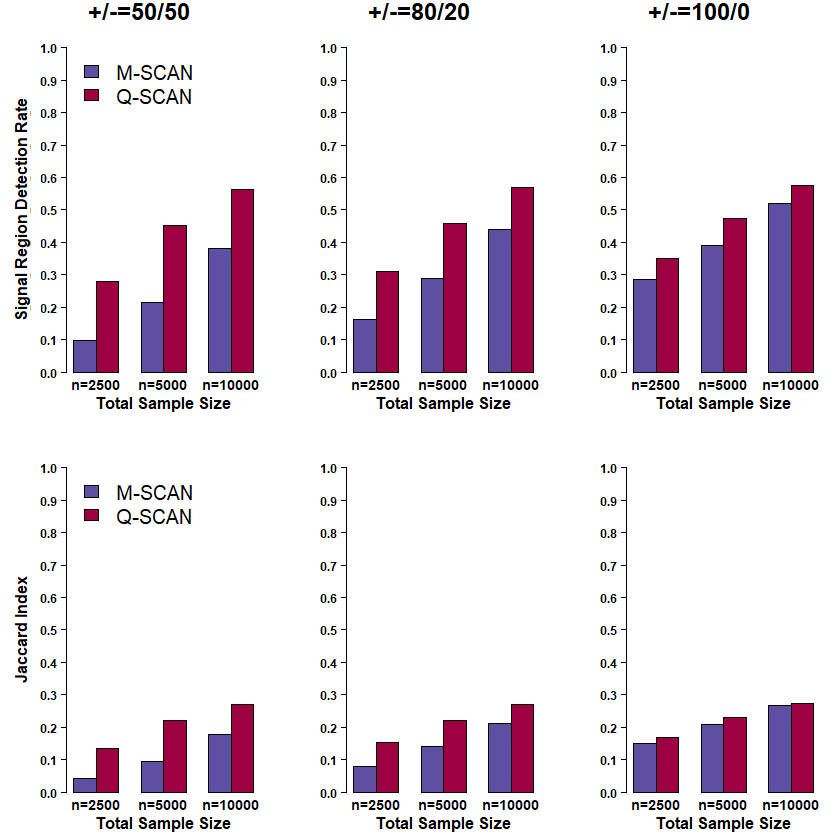}
\end{figure}

\pagebreak
\newpage
\clearpage
\thispagestyle{empty}

\pagebreak
\newpage
\clearpage
\thispagestyle{empty}

\begin{figure}[H]
\caption{Genetic landscape of the windows that are significantly associated with neutrophil counts on chromosome 1 among African Americans in the ARIC Whole Genome Sequencing Study. Three methods are compared: Q-SCAN, M-SCAN and 4 kb sliding window procedures using SKAT. A dot means that the sliding window at this location is significant using the method that the color of the dot represents. The physical positions of windows are based on build hg19.
}
\centering
\label{fig:Neu-landscape}
\includegraphics[width=0.85\textwidth]{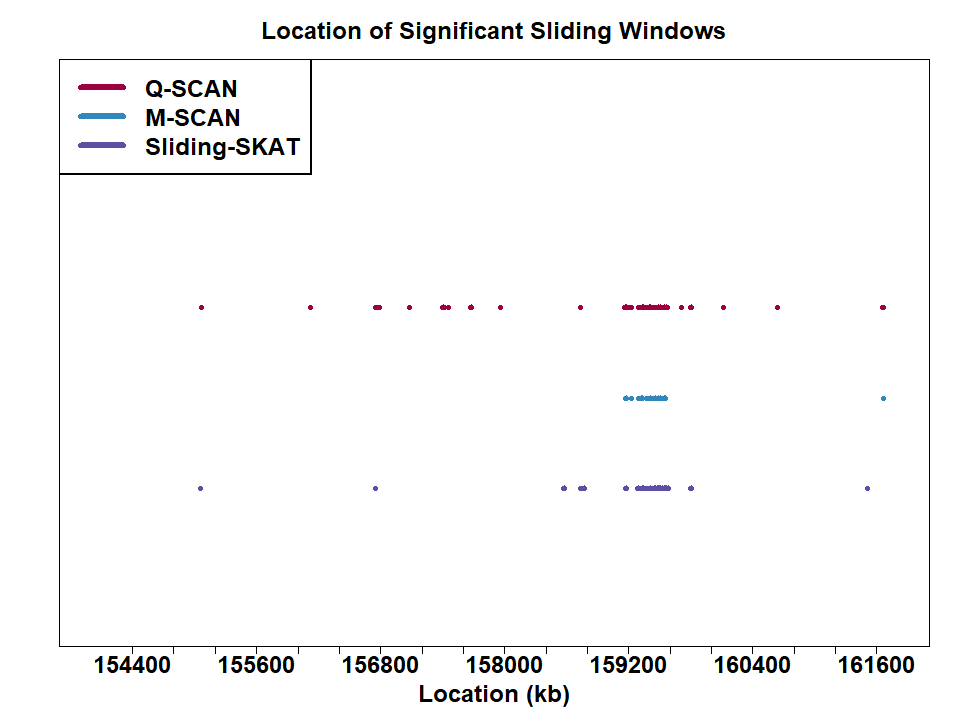}
\end{figure}

\end{document}